\newcommand{\blind}{1}
\def\bx{\boldsymbol x}
\def\bv{\boldsymbol v}
\def\bbeta{\boldsymbol \beta}
\def\bw{\boldsymbol w}
\def\ww{\widehat{\boldsymbol w}}
\def\tw{\widetilde{\boldsymbol w}}
\def\bdelta{\boldsymbol \delta}
\def\wbeta{\widehat{\boldsymbol \beta}}
\def\wdelta{\widehat{\boldsymbol \delta}}
\def\tbeta{\widetilde{\boldsymbol \beta}}
\def\tdelta{\widetilde{\boldsymbol \delta}}
\def\wtheta{\widehat{\boldsymbol \theta}}
\def\sbeta{\boldsymbol{\beta^{*}}}
\def\btheta{\boldsymbol \theta}
\def\wf{\widehat{f}}
\def\tf{\widetilde{f}}
\def\nc{n_{\mathcal{C}}}
\def\bc{\mathcal{C}}
\def\wc{\widehat{\mathcal{C}}}
\def\wv{\widehat{v}}
\def\wpsi{\widehat{\psi}}
\def\walpha{\widehat{\alpha}}
\def\bSigma{\boldsymbol{\Sigma}}
\def\ec{\mathbb{E}_{\mathcal{C}}}
\def\bgw{\boldsymbol{\gamma}_{\boldsymbol{w}}}
\def\bgt{\boldsymbol{\gamma}_{\boldsymbol{\theta}}}
\def\bTheta{\boldsymbol{\Theta}}
\def\bepsilon{\boldsymbol{\epsilon}}
\newtheorem{theorem}{Theorem}
\newtheorem{lemma}{Lemma}
\newtheorem{condition}{Condition}
\newtheorem{remark}{Remark}
\newtheorem{corollary}{Corollary}
\begin{document}

\def\spacingset#1{\renewcommand{\baselinestretch}%
{#1}\small\normalsize} \spacingset{1}


\if1\blind
{
  \title{\bf Transfer Learning for High-dimensional Quantile Regression with Distribution Shift}
  \author{Ruiqi Bai \\
    Department of Statistics and Data Science, Fudan University \\
    and \\
    Yijiao Zhang \thanks{Yijiao Zhang and Zhongyi Zhu are corresponding authors} \hspace{.2cm}\\
    Department of Biostatistics, Epidemiology and Informatics, \\University of Pennsylvania \\
    and \\
    Hanbo Yang \\
    School of the Gifted Young, University of Science and Technology of China \\
    and \\
    Zhongyi Zhu \footnotemark[1] \thanks{
    The authors gratefully acknowledge \textit{National Natural Science Foundation of China 12071087, 12331009}}\hspace{.2cm}\\
    Department of Statistics and Data Science, Fudan University
    }
  \maketitle
} \fi

\if0\blind
{
  \bigskip
  \bigskip
  \bigskip
  \begin{center}
    {\LARGE\bf Transfer Learning for High-dimensional Quantile Regression with Distribution Shift}
  \end{center}
  \medskip
} \fi

\bigskip

\begin{abstract}
Information from related source studies can often enhance the findings of a target study. However, the distribution shift between target and source studies can severely impact the efficiency of knowledge transfer. In the high-dimensional regression setting, existing transfer approaches mainly focus on the parameter shift. In this paper, we focus on the high-dimensional quantile regression with knowledge transfer under three types of distribution shift: parameter shift, covariate shift, and residual shift. We propose a novel transferable set and a new transfer framework to address the above three discrepancies. Non-asymptotic estimation error bounds and source detection consistency are established to validate the availability and superiority of our method in the presence of distribution shift. Additionally, an orthogonal debiased approach is proposed for statistical inference with knowledge transfer, leading to sharper asymptotic results. Extensive simulation results as well as real data applications further demonstrate the effectiveness of our proposed procedure.
\end{abstract}

\noindent%
{\it Keywords:} Knowledge transfer; High-dimensional quantile regression; Distribution shift; Transferable set; Orthogonal debiasing.
\vfill

\newpage
\spacingset{1.8} 

\section{Introduction} \label{section1} 

Previous experiences can offer valuable insights for learning new tasks. \emph{Transfer learning} improves the performance of target learners by leveraging knowledge from different but related sources \citep{zhuang2020comprehensive}, has achieved success across various areas, including natural language processing \citep{ruder2019transfer}, protein representation \citep{fenoy2022transfer}, and drug discovery \citep{turki2017transfer}. Despite its broad applications, statistical understanding of transfer learning remains incomplete, with the fundamental challenge being the detection and integration of valuable information from heterogeneous sources.

In this paper, we focus on the transfer learning for quantile regression (QR) \cite{koenker2017quantile} in the context of a high-dimensional setting. One of QR's most appealing features is its ability to describe the impact of covariates not only on the mean but also on the tails of the response distribution with robustness to heteroscedastic errors. Given potential non-negligible divergence between target and source distributions, QR can provide a more comprehensive assessment of what can be transferred from the source studies. 

To set up the framework, suppose we have a target dataset $((\bx_{i}^{(0)})^{\top}, y_{i}^{(0)})_{i=1}^{n_0}$ and $K$ independent source datasets $\{((\bx_{i}^{(k)})^{\top}, y_{i}^{(k)})_{i=1}^{n_k}\}_{k=1}^{K}$. Consider the linear QR models
\begin{align} \label{qr}
	y_{i}^{(k)} = (\bx_{i}^{(k)})^{\top} \bw^{(k)} + \epsilon_{i}^{(k)}, \qquad i=1,\dots,n_k, \quad k=0,\dots,K,
\end{align}	
where $\bw^{(k)}\in\mathbb{R}^p$ is the quantile coefficient of the response $y_{i}^{(k)}$ conditional on covariates $\bx_{i}^{(k)} \in \mathbb{R}^p$ at a given quantile level $\tau \in (0,1)$, and $\epsilon_{i}^{(k)}$ is the error term satisfying $\mathbb{P}(\epsilon_{i}^{(k)}  \le 0 | \bx_{i}^{(k)})=\tau$ with conditional density function $f^{(k)}(\cdot | \bx_{i}^{(k)})$.
The parameter of interest is $\sbeta:=\bw^{(0)}$ on the target data, which is assumed to be sparse, i.e., $\| \sbeta \|_0 = s \ll \min\{n_0, p\}$. In the high-dimensional setting, the number of covariates $p$ is very large, possibly larger than the overall sample size $N:=\sum_{k=0}^{K} n_{k}$. Our goal is to estimate the target parameter $\sbeta$, with the help of abundant external data.

\textbf{Three types of distribution shift.} For the QR model \eqref{qr} specified in all studies, there may exist three types of distribution shift between the target and source domains, which are: 1) \emph{parameter shift}: $\sbeta \neq \bw^{(k)}$, i.e., the model coefficients differ; 2) \emph{residual shift}: $P(\epsilon_{i}^{(0)} | \bx_{i}^{(0)}) \neq P(\epsilon_{i}^{(k)} | \bx_{i}^{(k)})$, i.e., the conditional distributions of model residuals differ; 3) \emph{covariate shift}: $P(\bx_{i}^{(0)}) \neq P(\bx_{i}^{(k)})$, i.e., the marginal covariate distributions differ. Both parameter and residual shift can be seen as \emph{label shift}, meaning that the conditional label distributions $P(y_{i}^{(0)} | \bx_{i}^{(0)})$ and $P(y_{i}^{(k)} | \bx_{i}^{(k)})$ differ. Many examples have revealed that blindly transferring sources under distribution shift may pose challenges to the final performance and lead to the negative transfer \citep{wang2019characterizing, zhang2022survey}. Hence, the core issue of this paper is:
\begin{center}
    \emph{
        How to identify the transferable information from different source studies under distribution (covariate/parameter/residual) shift? How to conduct valid prediction and inference for high-dimensional quantile regression with knowledge transfer?
    }
\end{center}

Closely related to transfer learning, numerous approaches have been developed for multi-source data analysis, including federated learning \citep{mcmahan2017communication}, distributed learning \citep{jordan2018communication, fan2023communication}, fusion learning \citep{shen2020fusion, li2023efficient}, and multi-task learning \citep{duan2023adaptive, knight2024multi}. However, distinguished from multi-source settings where each dataset is equally important, transfer learning aims to leverage information from source domains to improve the performance of the target task. Statistical investigations into transfer learning have spanned diverse contexts, including linear regression \citep{li2022transfer, tian2022transfer}, nonparametric classification \citep{cai2021transfer}, nonparametric regression \citep{cai2024transfer}, graphical models \citep{li2023transfer}, and causal inference \citep{wei2023transfer}.

To address parameter shift, \citep{li2022transfer} introduced the contrast vector $\bdelta^{(k)}:=\bw^{(k)}-\sbeta$ and defined the transferable set $A_h := \{ \|\bdelta^{(k)}\|_1 \le h \}$, where source studies with a small contrast of parameters are considered informative. They proposed a two-step transfer procedure to first obtain an integrated estimate based on informative sources, then debias this estimate on the target data. The two-step framework is further extended to generalized linear regression \cite{tian2022transfer} and quantile regression \cite{zhang2022transfer,huang2023estimation}. In addition, several studies have investigated covariate shift and proposed remedies in high-dimensional linear regression \citep{li2023estimation, zhao2023residual, he2024transfusion}. 

To the best of our knowledge, none have explicitly accounted for residual shift. However, many applications highlight the necessity of addressing residual shift in heterogeneous settings, such as in image recognition \cite{park2023label} and disease diagnosis \cite{chen2022estimating}.
Here a motivating example is provided to illustrate the potential negative transfer caused by residual shift. We consider a similar setting to the simulation in \cite{zhang2022transfer} with no parameter or covariate shift. Specifically, we assume only one source study with different residual distributions. Figure \ref{fig:counter example} compares the average $\ell_2$-errors of four methods for estimating the target parameter at $0.2$-th quantile. More details can be found in the supplementary materials. 

\begin{figure}[ht!] 
    \begin{center}
        \includegraphics[width=1\linewidth]{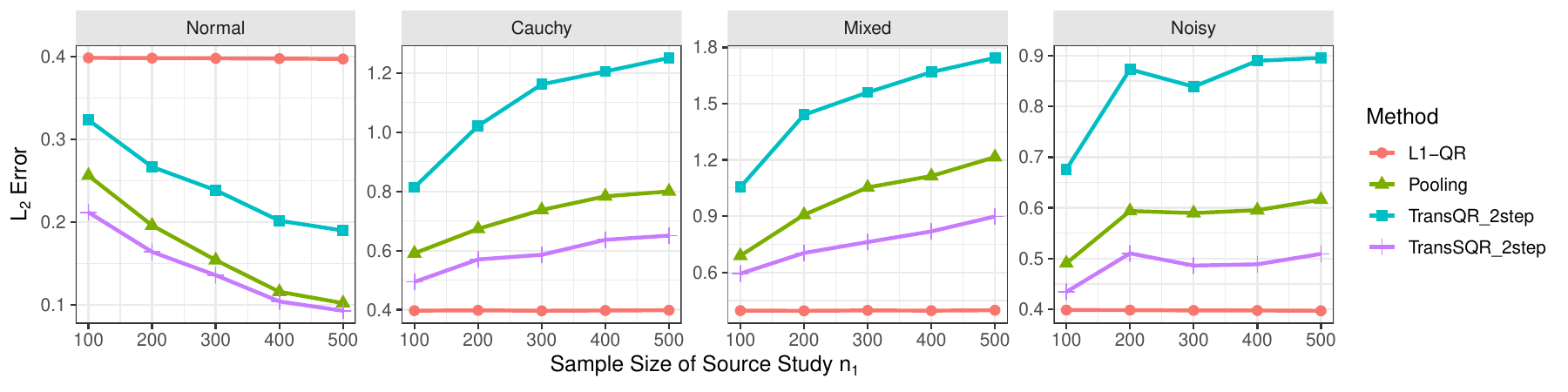}
    \end{center}
    \caption{Average $\ell_2$-errors at $0.2$-th quantile, where source residual distributions are: 1) Normal: $\mathcal{N}(0,1)$; 2) Cauchy: $\mathcal{C}(0,5)$; 3) Mixed: mixed $\mathcal{N}(-3,0.5)$ and $\mathcal{N}(3,0.5)$; 4) Noisy: $\mathcal{N}(0,5^2)$.}
    \label{fig:counter example}
\end{figure}

As shown in Figure \ref{fig:counter example}, apart from the normal setting that demonstrates the anticipated effect, all other settings reveal significant negative transfer, strongly suggesting that assessing transferability based solely on parameter similarity is insufficient in cases involving residual shift. Notably, residual shift fundamentally reflects the information quality of a source study around the objective quantile, where this difference can be characterized by the local density and sample size. Accordingly, we introduce a novel transferable set that simultaneously accounts for parameter and residual shift,
$$
    \bc_h = \left\{ k: 1 \le k \le K, \quad  \| \bdelta^{(k)} \|_1 \le h_1, \quad \frac{n_0 \mathbb{E}[f^{(0)}(0| \bx_{i}^{(0)})]}{n_k \mathbb{E}[f^{(k)}(0 | \bx_{i}^{(k)})]} \le h_2 \right\},
$$
where $h_1, h_2$ represent oracle yet unknown transferability levels. This set is designed to filter out source studies with large parameter discrepancies or limited information at the $\tau$-th quantile. We further elaborate on the rationale behind this set in Section \ref{section2}. With this transferable set, we develop a comprehensive transfer learning framework to address parameter/covariate/residual shift concurrently for high-dimensional quantile regression. 

Our contributions are threefold. First, we introduce a new transferable set to handle both parameter and residual shift, offering deeper insights into managing distribution shift under knowledge transfer. For parameter estimation, we extend the constrained $\ell_1$-minimization \citep{li2023estimation} to quantile regression, ensuring robustness in the presence of covariate shift. We further develop a corresponding source detection procedure and integrate it into a unified transfer estimation algorithm. Second, in theory, we establish non-asymptotic upper error bounds for our transfer estimators, broadening existing results to account for residual shift in addition to parameter shift. We also derive the minimax lower error bound to demonstrate the rationale of our approach, and provide the detection consistency of the source screening procedure. Third, a debiased transfer approach based on Neyman orthogonality \citep{belloni2019valid} is proposed for statistical inference on quantile coefficients, which to our knowledge, is the first to debias on informative sources to enhance inference efficiency under distribution shift. By integrating valuable source studies, our approach achieves sharper asymptotic rates than existing results with $\sqrt{n_0}$-normality \citep{tian2022transfer, huang2023estimation}, leading to more efficient debiased estimates and shorter confidence intervals.

The rest of the article is organized as follows. In Section \ref{section2}, we introduce our transfer framework under distribution shift. Section \ref{section3} describes the source detection procedure. In Section \ref{section4}, we present the orthogonal debiased approach for statistical inference with knowledge transfer. Simulation studies and an application to Genotype-Tissue Expression (GTEx) dataset are provided in Section \ref{section5} and \ref{section6}, respectively.

\section{Transfer Learning under Distribution Shift} \label{section2}

We begin this section with some necessary notations. For two sequences of positive numbers $\left\{a_n\right\}$ and $\left\{b_n\right\}$, we write $a_n \lesssim b_n$ if $a_n \leq c b_n$ for some universal constant $c \in(0, \infty)$, and $a_n \gtrsim b_n$ if $a_n \geq c^{\prime} b_n$ for some universal constant $c^{\prime} \in(0, \infty)$. We say $a_n \asymp b_n$ if $a_n \lesssim b_n$ and $a_n \gtrsim b_n$. For any $q \in[0, \infty]$ and vector $\bx=(x_1, \cdots, x_p) \in \mathbb{R}^p$, we write $\|\bx\|_q$ for its $l_q$-norm. Let $a \wedge b$ and $a \vee b$ denote the minimum and maximum of $a$ and $b$, respectively. For an index subset $S \subseteq \{1, 2, \dots, p\}$, let $|S|$ denote its cardinality, and let $S^c$ denote its complement, i.e., $S^c = \{1, 2, \dots, p\} \setminus S$. For a vector $\bv \in \mathbb{R}^p$, we use $\bv_S$ to denote the restriction of vector $\bv$ to the index set $S$. Denote $\boldsymbol{e}_j=(\boldsymbol{0}_{j-1}^\top,1,\boldsymbol{0}_{p-j}^\top)^\top$ as the unit vector. The sub-Gaussian norm of a random variable $u \in \mathbb{R}$ is $\|u\|_{\psi_2}=\inf\{t>0:\mathbb{E}[\exp(u^2/t^2)] \le 2\}$ and the sub-Gaussian norm of a random vector $\boldsymbol{U} \in \mathbb{R}^n$ is $\|\boldsymbol{U}\|_{\psi_2}=\sup _{\|\bv\|_2=1, \boldsymbol{v} \in \mathbb{R}^n}\|\langle\boldsymbol{U}, \bv\rangle\|_{\psi_2}$. The expectation symbol $\mathbb{E}[(\cdot)_i^{(k)}]:=\mathbb{E}_{(\bx_{i}^{(k)}, y_i^{(k)})}[(\cdot)_i^{(k)}]$ means obtaining expectation with the joint distribution $P(\bx_{i}^{(k)}, y_i^{(k)})$.

\subsection{Methodology of Tackling Three Types of Distribution Shift} \label{section2-1} 

\textbf{Parameter Shift.} Naturally we should consider a source study as informative if its parameter contrast $\bdelta^{(k)}$ is relatively small. We choose the $\ell_1$-sparsity to quantify the informative level, as in many practical applications, parameter shift typically occurs across multiple dimensions without the overall magnitude  growing too fast. Formally, the screening criterion is to reject source studies with $\| \bdelta^{(k)} \|_1$ larger than an oracle transferring level, which has been widely adopted in the previous transfer work \citep{li2022transfer, tian2022transfer}.  

\noindent \textbf{Residual Shift.} We start from the simplest low-dimensional linear regression to understand how to tackle the residual shift. Suppose the target and source model as
\begin{align*}
	\left\{
	\begin{array}{l}
		\text{target model: } ~ y_i^{(0)}=(\bx_i^{(0)})^\top\sbeta+\epsilon_i^{(0)}, ~ i=1,\dots, n_0, ~ \epsilon_i^{(0)} \sim \mathcal{N}(0, \sigma_0^2), \\
		\text{source model: } ~ y_i^{(1)}=(\bx_i^{(1)})^\top\sbeta+\epsilon_i^{(1)}, ~ i=1,\dots, n_1, ~ \epsilon_i^{(1)} \sim \mathcal{N}(0, \sigma_1^2).
	\end{array}
	\right.
\end{align*}
Denote the covariance matrix $\Sigma_x:=\mathbb{E}[\bx_i^{(0)}(\bx_i^{(0)})^\top]=\mathbb{E}[\bx_i^{(1)}(\bx_i^{(1)})^\top]$. Based on the knowledge of linear models, the target and pooling OLS estimator $\wbeta_{tar}, \wbeta_{pool}$ satisfy:
\small
\begin{align*}
	\left(\wbeta_{tar} - \sbeta\right) \sim \mathcal{N}\left(0, \frac{\sigma_0^2}{n_0}\Sigma_x^{-1}\right), \quad \left(\wbeta_{pool} - \sbeta \right) \sim \mathcal{N}\left(0, \frac{n_0 \sigma_0^2 + n_1 \sigma_1^2}{(n_0 +  n_1)^2} \Sigma_x^{-1}\right).
\end{align*}
\normalsize
If $\sigma_1^2$ is much larger than $\sigma_0^2$, i.e., the source study is much more noisy, then the pooling estimator will be much less efficient compared to the target one. 

Moving to the quantile regression with traditional loss $\rho_{\tau}(x):=x(\tau-\mathbb{I}(x\le0))$, assume $\epsilon_i^{(0)} \perp \bx_i^{(0)}, \epsilon_i^{(1)} \perp \bx_i^{(1)}$ and $\mathbb{P}(\epsilon_i^{(0)} \le 0) = \mathbb{P}(\epsilon_i^{(1)} \le 0) = \tau$, denote the density functions of $\epsilon_i^{(0)}, \epsilon_i^{(1)}$ as $f_0(\cdot), f_1(\cdot)$, respectively. Under some regular conditions, it is easy to verify that
\small
\begin{align*} 
    \left( \wbeta_{tar}-\sbeta \right) \xrightarrow{d} \mathcal{N}\left(0, \frac{\tau(1-\tau)}{n_0 (f^{(0)}(0))^2} \Sigma_x^{-1}\right), ~
    \left( \wbeta_{pool} - \sbeta \right) \xrightarrow{d} \mathcal{N}\left(0, \frac{\tau(1-\tau) (n_0 + n_1)}{(n_0 f^{(0)}(0) + n_1 f^{(1)}(0))^2} \Sigma_x^{-1}\right).
\end{align*}
\normalsize
If $f^{(0)}(0)$ is much smaller than $f^{(1)}(0)$, for example, when $n_1 f^{(1)}(0) \le (\sqrt{1 + n_1 /n_0} - 1) n_0 f^{(0)}(0)$, $\wbeta_{pool}$ will be less efficient than $\wbeta_{tar}$. The residual density values $f^{(0)}(0)$ and $f^{(1)}(0)$ serve a similar role for QR estimators as residual variances $\sigma_0^2$ and $\sigma_1^2$ do for OLS estimators, both indicating the quality of information available for estimation. Also, information quality depends on the sample size, since given the density function, a larger sample size provides richer quantile information. Consequently, another screening criterion should exclude source studies with small sample sizes $n_k$ and low densities at the $\tau$-th quantile $f^{(k)}(0 | \bx_{i}^{(k)})$, as they are unlikely to contribute valuable information beyond introducing noise. For simplicity, we denote $f^{(k)}_i := f^{(k)}(0 | \bx_{i}^{(k)})$ as the conditional density of residual $\epsilon_{i}^{(k)}$ at zero, hereinafter referred to as the \emph{$\tau$-th density}.

\begin{remark}
    Rejecting sources with lower $\tau$-th density $f^{(k)}_i$, i.e., leveraging sources with higher $\tau$-th density, can also be explained as reducing external variance. If a source study has a negligible variance term, i.e., $\sigma_1^2=0$, then it means accurately identifying the target parameter, leading to a wonderful transferable study. From the perspective of minimizing the mean square error (MSE), the transfer estimator benefits from a smaller contrast $\| \bdelta^{(k)} \|_1$ to reduce bias, while a larger $\tau$-th density helps to control variance. We further extend this motivating case to the covariate shift setting, where similar insights hold; please see Section 8.1.1 of supplementary materials for details.
\end{remark} 

Building on the previous analysis, we propose the oracle transferable set $\bc_h$ as:
\begin{align} \label{oracle trans set}
	\bc_h= \left\{ k: \quad 1 \le k \le K,  \quad  \| \bdelta^{(k)} \|_1 \le h_1, \quad \frac{n_0 \mathbb{E}[f^{(0)}_i]}{n_k \mathbb{E}[f^{(k)}_i]} \le h_2 \right\},
\end{align}
where $h_1$ and $h_2$ are oracle transferability levels characterizing the magnitude of parameter and residual shift, respectively. For each source study, the term $\| \bdelta^{(k)} \|_1$ reflects parameter similarity, while the term $n_k \mathbb{E}[f^{(k)}_i]$ captures the information quality around the objective quantile. The multiplicative form of $n_k \mathbb{E}[f^{(k)}_i]$ is motivated by the structure $\sum_{k \in {0} \cup \bc_h} n_k \mathbb{E}[f_i^{(k)}]$ appearing in Theorem \ref{thm:convergence} in the next subsection. To guarantee the improvement of transfer learning, we only leverage the information from source studies included in this set $\bc_h$. It is important to note that, unlike $h_1$, which is expected to converge to zero as in the previous literature \citep{li2022transfer, tian2022transfer}, $h_2$ depends on the sample sizes of the target and source studies, thus is not necessary to converge to zero or diverges to infinity.

\noindent \textbf{Covariate Shift.}  The negative effect of covariate shift is mainly brought by the divergence of covariance matrices $\bSigma^{(k)}:=\mathbb{E}[\bx_{i}^{(k)}(\bx_{i}^{(k)})^\top]$ between target and source studies. As mentioned by \cite{li2022transfer, he2024transfusion}, in the two-step transfer approach with transferable set $\bc_h$, the pooling estimator in the first step makes the contrast $\bdelta^{(k)}$ amplified by the factor $C_{\bSigma}:=1+|\bc_h|\max_{k \in \bc_h}\sup_{1 \le j \le p}\|\boldsymbol{e}_j^\top (\bSigma^{(k)}-\bSigma^{(0)})(\sum_{k \in \bc_h}\bSigma^{(k)})^{-1}\|_1$, which may diverge when source covariance matrices $\bSigma^{(k)}$ are dissimilar to the target $\bSigma^{(0)}$. To address this problem, here we generalize the constrained optimization algorithm proposed by \cite{li2023estimation} to the field of quantile regression, for jointly estimating the target parameter $\sbeta$ and contrast vectors $\{\bdelta^{(k)}\}_{k \in \bc_h}$. Our transfer framework is formalized as:
\small
\begin{equation} \label{l1 transfer}
    \begin{aligned}
        & \left(\wbeta, \{\wdelta^{(k)}\}_{k\in \bc_h}\right) = \underset{\bbeta, \{\bdelta^{(k)}\}_{k\in \bc_h}}{\arg \min} \lambda_{\bbeta} \left\| \bbeta \right\|_1 + \sum_{k\in \bc_h} \lambda_k \| \bdelta^{(k)} \|_1, \\ 
        & \text{subject to } 
        \left\{ \begin{array}{l}
            \| S_{n_k} (\bbeta + \bdelta^{(k)}) \|_\infty \le \lambda_k, \quad \text{for } k \in  \{0\} \cup \bc_h, \\
            \| \sum_{k \in \{0\} \cup \bc_h} S_{n_k} (\bbeta + \bdelta^{(k)}) \|_\infty \le \lambda_{\bbeta},
        \end{array} \right. 
    \end{aligned}
\end{equation}
\normalsize
where $S_{n_k}(\bbeta):=\partial L^{(k)}(\bbeta) / \partial \bbeta$ is the subgradient of the quantile loss function $L^{(k)}(\bbeta):=\sum_{i=1}^{n_k}\rho_{\tau}(y_i^{(k)}-(\bx_i^{(k)})^{\top}\bbeta)$, and $\lambda_{\bbeta}, \{ \lambda_k \}_{k \in\{0\} \cup \bc_h}$ are tuning parameters. To solve this constrained $\ell_1$-minimization problem \eqref{l1 transfer}, we can follow the iterative algorithm adopted in \cite{li2023estimation}, where the detailed procedure is summarized in Section 8.1.2 of supplementary materials. Notably, the $\ell_1$-sparsity regularizer for each contrast $\bdelta^{(k)}$ allows us to be free from the similarity restriction of covariance matrices, hence removing the $C_{\bSigma}$ factor and achieving robustness under covariance heterogeneity \citep{li2023estimation, he2024transfusion}. 

\begin{remark}
    In our transfer framework \eqref{l1 transfer}, the objective function encourages sparse solutions of $\wbeta$ and $\{\wdelta^{(k)}\}_{k\in \bc_h}$. The constraint $\| S_{n_0} (\bbeta) \|_{\infty} \le \lambda_0$ is inherited from the target model, imposing that $\bbeta$ should be identified as the true parameter $\sbeta$ in the target task. The constraint $\| S_{n_k} (\bbeta + \bdelta^{(k)}) \|_{\infty} \le \lambda_k$ comes from the $k$-th source model, imposing that $\bdelta^{(k)}$ should be identified as the true contrast $\bw^{(k)}-\sbeta$. The last constraint $\| \sum_{k \in \{0\} \cup \bc_h} S_{n_k} (\bbeta + \bdelta^{(k)}) \|_\infty \le \lambda_{\bbeta}$ aggregates target study and studies in the oracle transferable set $\bc_h$, revealing that our estimation leverages information from valuable sources. In contrast to the two-step transfer framework \citep{li2022transfer, tian2022transfer}, our algorithm can simultaneously estimate $\sbeta$ and $\bdelta^{(k)}$, providing more information for the downstream inference task.
\end{remark}

\subsection{Theory on Estimation Error}

Now we study the non-asymptotic convergence rate of $\wbeta - \sbeta$ obtained by the transfer approach \eqref{l1 transfer} with transferable set $\bc$, where $\bc$ is an arbitrary prespecified set for theoretical generality. Under the high-dimensional sparse setting, we characterize the geometric structure of $\wbeta - \sbeta$ and $\{\wdelta^{(k)}-\bdelta^{(k)}\}_{k=1}^K$ through restrictive sets \citep{belloni2011l1, wang2024analysis}, denoted by
\small
\begin{align*}
	\text{for $\wbeta - \sbeta$ with $\ell_0$-sparsity: \quad} & \Gamma_H =\left\{v \in \mathbb{R}^p:\left\|v_{S^c}\right\|_1 \leq \left\|v_S\right\|_1\right\}, \\
	\text{for $\wdelta^{(k)}-\bdelta^{(k)}$ with $\ell_1$-sparsity: \quad} & \Gamma_W^{(k)} = \{ v \in \mathbb{R}^p: \| v_{S_k^c} \|_1 \leq \| v_{S_k} \|_1 + 2 \| \bdelta^{(k)}_{S_k^c} \|_1 \},
\end{align*}
\normalsize
where $S=\{j: \bbeta^*_j \neq 0, 1 \leq j \leq p\}$, $S_k = \{ j: |\bdelta^{(k)}_j| > a_k, 1 \leq j \leq p\}$ with a threshold $a_k > 0$.  Furthermore, for each contrast vector $\bdelta^{(k)}$, we denote $s_k = \| \bdelta^{(k)}_{S_k} \|_0, \eta_k = \| \bdelta^{(k)}_{S_k^c} \|_1$. As noted in \citep{wang2024analysis}, $a_k$ controls the approximate sparsity level of the contrast vector $\bdelta^{(k)}$. Although the exact value of each $a_k$ is not specified, the constraints on $s_k$ and $\eta_k$ in our subsequent theorem impose implicit requirements on $a_k$. For simplicity, we omit the dependence on $a_k$ in the notations $\Gamma_W^{(k)}, s_k, \eta_k$. We introduce three conditions below, where the index $k$ refers to each study $k \in \{0\} \cup \bc$.

\begin{condition} \label{cond:restricted eigenvalue}
    There exist universal positive constants $m_1, m_2, m_u$ such that
    \small
    \begin{align*} 
		\min \left\{ \frac{v^T \mathbb{E}[f_i^{(k)}\bx_{i}^{(k)}(\bx_{i}^{(k)})^\top] v}{\|v\|_2^2}, ~ \frac{\|\mathbb{E}[f_i^{(k)}\bx_{i}^{(k)}(\bx_{i}^{(k)})^\top]v\|_\infty}{\|v\|_\infty} \right\} & \ge m_1 \mathbb{E}[f_i^{(k)}], \\
		\max \left\{ \frac{v^T \mathbb{E}[f_i^{(k)}\bx_{i}^{(k)}(\bx_{i}^{(k)})^\top] v}{\|v\|_2^2}, ~ \frac{\|\mathbb{E}[f_i^{(k)}\bx_{i}^{(k)}(\bx_{i}^{(k)})^\top]v\|_\infty}{\|v\|_\infty} \right\} & \le m_2 \mathbb{E}[f_i^{(k)}], 
	\end{align*}
    \normalsize
	for any non-zero vector $v \in \Gamma_W^{(k)}$ ($v \in \Gamma_H$ when $k=0$). Also, the spectral norm of covariance matrix satisfies $\lambda_{\max}(\Sigma^{(k)}) \leq m_u$.
\end{condition}

\begin{condition} \label{cond:sub-gaussian design}
    Each covariate $\bx_i^{(k)}$ is a mean-zero $m_g$-sub-Gaussian random vector, where the sub-Gaussian norm $m_g$ is a universal positive constant.
\end{condition}

\begin{condition} \label{cond:density} 
    The conditional density function $f^{(k)}(\cdot|\bx_{i}^{(k)})$ is positive, continuously differentiable, and uniformly upper bounded. Also, there exist a universal positive constant $b_0$ such that the derivative $f_i^{(k)'}(t|\bx_{i}^{(k)})$ is uniformly bounded for $|t| \le b_0$.
\end{condition}

Condition \ref{cond:restricted eigenvalue} assumes a modified version of restricted eigenvalue condition (RSC) with incorporating the $\tau$-th density $\mathbb{E}[f_i^{(k)}]$. The reason for placing $\mathbb{E}[f_i^{(k)}]$ on the right-hand side is to technically derive error bounds involving each study's $\tau$-th density. Note that if the study model is homoscedastic, then this condition will return to the original RSC form. Condition \ref{cond:sub-gaussian design} needs covariates following sub-Gaussian distributions, which relaxes the former requirement of bounded design \citep{tian2022transfer, huang2023estimation}. Condition \ref{cond:density} requires the boundness and smoothness of conditional density functions. Here a uniform lower bound of $f_i^{(k)}$ is not assumed, since we want to reveal the effect of residual shift in the error bounds. All three conditions are very common in the high-dimensional QR literature \citep{belloni2011l1, wang2024analysis}.

\begin{theorem} \label{thm:convergence}
	(Convergence rate for $\wbeta$ with known transferable set $\bc$) \\
	Assume Conditions \ref{cond:restricted eigenvalue}-\ref{cond:density} are satisfied. Define $\nc=n_0 + \sum_{k \in \mathcal{C}}n_k$. Let $\wbeta$ be obtained from the transfer framework \eqref{l1 transfer} with the transferable set $\bc$. For each study $k \in \{0\} \cup \bc $, let $\lambda_{\bbeta} = c_\lambda \sqrt{\nc \log p}, \lambda_k = c_\lambda n_k \sqrt{\log p / (n_k \wedge n_0)}$, where $c_\lambda$ is a large enough constant. Let $r_k = s_k \sqrt{\log p / (n_k \wedge n_0)} + \eta_k$, if
    \small
	\begin{align} \label{theorem 1 assumption}
		\frac{s\log p}{n_0^{1/4}\sqrt{n_k}} + \frac{s (\log p)^{3/2}}{n_k} + r_k \sqrt{\frac{\log p}{\sqrt{n_k \wedge n_0}}} = o\left(\sqrt{ \frac{\log p}{n_k \wedge n_0}}\right)
	\end{align}
    \normalsize
	holds for each source study $k \in \bc$, then for any constant $\varepsilon>0$, with probability at least $1- \varepsilon - (|\bc|+1)A_1 \exp( - A_2 \log p)$, we have that
    \small
	\begin{align}
		\label{l2 error}
		\| \wbeta - \sbeta \|_2 & \le  C_1 \frac{\sqrt{n_\bc s \log p}}{\sum_{k \in \{0\} \cup \mathcal{C}} n_k \mathbb{E}[f_i^{(k)}]}   + C_2 \sqrt{ \frac{\sum_{k \in \mathcal{C}} n_k \sqrt{\frac{\log p}{n_k \wedge n_0}} \| \bdelta^{(k)} \|_1} {\sum_{k \in \{0\} \cup \mathcal{C}} n_k \mathbb{E}[f_i^{(k)}] }}, \\
		\label{l1 error}
		\| \wbeta - \sbeta \|_1 & \le C_1 \frac{s \sqrt{n_\bc \log p}}{\sum_{k \in \{0\} \cup \mathcal{C}} n_k \mathbb{E}[f_i^{(k)}]}   + C_2 \sqrt{ \frac{\sum_{k \in \mathcal{C}} s n_k \sqrt{\frac{\log p}{n_k \wedge n_0}}\| \bdelta^{(k)} \|_1} {\sum_{k \in \{0\} \cup \mathcal{C}} n_k \mathbb{E}[f_i^{(k)}] }},
	\end{align}
    \normalsize
	where $A_1, A_2, C_1, C_2$ are positive constants and their detailed formation are shown in the supplementary materials.
\end{theorem}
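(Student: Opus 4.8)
The plan is to follow the standard high-dimensional $M$-estimation recipe---feasibility of the truth, cone localization, restricted strong convexity, and aggregation across studies---adapted to the non-smooth quantile loss and the pooled transfer objective \eqref{l1 transfer}. First I would verify that the true tuple $(\sbeta, \{\bdelta^{(k)}\}_{k\in\bc})$ is feasible with high probability, so that the $\lambda$'s in the statement are justified. Because $\sbeta + \bdelta^{(k)} = \bw^{(k)}$ is the true coefficient of study $k$, the score $S_{n_k}(\bw^{(k)}) = -\sum_{i=1}^{n_k}\bx_i^{(k)}\psi_\tau(\epsilon_i^{(k)})$ with $\psi_\tau(u)=\tau-\mathbb{I}(u\le 0)$ is a sum of independent, mean-zero terms, since $\mathbb{P}(\epsilon_i^{(k)}\le 0\mid\bx_i^{(k)})=\tau$. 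Under the sub-Gaussian design of Condition \ref{cond:sub-gaussian design}, each coordinate concentrates at rate $\sqrt{n_k\log p}$ and the pooled score $\sum_k S_{n_k}(\bw^{(k)})$ at rate $\sqrt{\nc\log p}$; a union bound over the $p$ coordinates and the $|\bc|+1$ studies shows the constraints hold at the stated levels $\lambda_k = c_\lambda n_k\sqrt{\log p/(n_k\wedge n_0)}$ and $\lambda_{\bbeta}=c_\lambda\sqrt{\nc\log p}$.

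Feasibility of the truth together with optimality of $(\wbeta,\{\wdelta^{(k)}\})$ then yields the basic inequality $\lambda_{\bbeta}\|\wbeta\|_1+\sum_k\lambda_k\|\wdelta^{(k)}\|_1 \le \lambda_{\bbeta}\|\sbeta\|_1+\sum_k\lambda_k\|\bdelta^{(k)}\|_1$; decomposing each $\ell_1$-norm on the supports $S$ and $S_{a_k}^{(k)}$ and invoking the approximate sparsity of Condition \ref{cond:sparsity} localizes the errors into the cones $\Gamma_H$ and $\{\Gamma_W^{(k)}\}$. The heart of the argument is restricted strong convexity for the quantile loss: using Knight's identity I would write the symmetrized Bregman divergence as $D_k(v)=\sum_i\int_0^{(\bx_i^{(k)})^\top v}(\mathbb{I}(\epsilon_i^{(k)}\le t)-\mathbb{I}(\epsilon_i^{(k)}\le 0))\,dt$, take expectations to produce the density-weighted curvature $n_k\,v^\top\mathbb{E}[f_i^{(k)}\bx_i^{(k)}(\bx_i^{(k)})^\top]v$, and lower-bound it by $m_1\mathbb{E}[f_i^{(k)}]\,n_k\|v\|_2^2$ on the restricted cone via Condition \ref{cond:restricted eigenvalue}. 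Controlling the empirical fluctuation of $D_k$ around this expectation and showing that the nonlinear remainder is dominated by the quadratic term is exactly where assumption \eqref{theorem 1 assumption} is consumed.

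Finally I would aggregate across studies. Summing the per-study RSC lower bounds against the score bounds from the first step yields, on the localization event, an inequality of the form $\big(\sum_{k\in\{0\}\cup\bc}n_k\mathbb{E}[f_i^{(k)}]\big)\|\wbeta-\sbeta\|_2^2 \lesssim \sqrt{\nc s\log p}\,\|\wbeta-\sbeta\|_2 + \sum_{k\in\bc}\lambda_k\|\bdelta^{(k)}\|_1$, where the linear term comes from $\langle\text{score},\wbeta-\sbeta\rangle\le\lambda_{\bbeta}\|\wbeta-\sbeta\|_1\lesssim\lambda_{\bbeta}\sqrt{s}\|\wbeta-\sbeta\|_2$ through the cone condition, and the last term is the bias injected by the contrasts. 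Solving this quadratic in $\|\wbeta-\sbeta\|_2$ via $Ax^2\lesssim Bx+C \Rightarrow x\lesssim B/A+\sqrt{C/A}$ produces precisely the two summands of \eqref{l2 error}: the variance term $\sqrt{\nc s\log p}/\sum_k n_k\mathbb{E}[f_i^{(k)}]$ and the contrast-bias term scaling with $\sum_{k}n_k\sqrt{\log p/(n_k\wedge n_0)}\|\bdelta^{(k)}\|_1$. The $\ell_1$ bound \eqref{l1 error} then follows from the cone relation $\|\wbeta-\sbeta\|_1\lesssim\sqrt{s}\,\|\wbeta-\sbeta\|_2$.

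I expect the main obstacle to be the restricted strong convexity step under the relaxed sub-Gaussian design of Condition \ref{cond:sub-gaussian design} rather than bounded covariates: the increments $(\bx_i^{(k)})^\top v$ are no longer uniformly bounded, so controlling $D_k(v)-\mathbb{E}[D_k(v)]$ uniformly over the cone requires truncation and a delicate empirical-process treatment of the interaction between the unbounded design and the indicator differences, which is where the higher-order terms in \eqref{theorem 1 assumption} must be shown negligible. A second delicate point is disentangling the coupled contrast errors $\{\wdelta^{(k)}-\bdelta^{(k)}\}$ from $\wbeta-\sbeta$ so that each per-study eigenvalue bound applies, while still assembling the aggregated curvature $\sum_{k}n_k\mathbb{E}[f_i^{(k)}]$ intact under covariate shift with heterogeneous $\bSigma^{(k)}$; this is precisely the robustness to covariance dissimilarity afforded by the separate $\ell_1$-regularization of each $\bdelta^{(k)}$.
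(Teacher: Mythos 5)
Your proposal follows essentially the same route as the paper's proof: feasibility of the true $(\sbeta,\{\bdelta^{(k)}\})$ under the stated $\lambda$'s, cone localization from the basic inequality, a density-weighted restricted-strong-convexity bound delivering the aggregated curvature $\sum_{k}n_k\mathbb{E}[f_i^{(k)}]$, and a quadratic inequality solved for $\|\wbeta-\sbeta\|_2$ with the cone relation giving the $\ell_1$ bound. The only point where your sketch diverges from the paper's mechanics is the cross term $\langle S_{n_k}(\wbeta+\wdelta^{(k)})-S_{n_k}(\wbeta+\bdelta^{(k)}),\wbeta-\sbeta\rangle$ (your ``disentangling'' step): the paper handles it by first deriving separate $\ell_1/\ell_2$ rates and sparsity bounds for each $\wdelta^{(k)}$ and then applying a Bahadur-type subgradient representation to reduce it to $-c\,\lambda_k\|\wdelta^{(k)}-\bdelta^{(k)}\|_1\gtrsim -c'\lambda_k\|\bdelta^{(k)}\|_1$, and it is there---not in the RSC fluctuation control---that assumption \eqref{theorem 1 assumption} is actually consumed.
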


\begin{remark}
    The assumption \eqref{theorem 1 assumption} is a technical requirement for parameter growth, where $r_k$ is actually the $\ell_1$-error bound of each $\wdelta^{(k)}$ under soft sparsity. A sufficient condition for \eqref{theorem 1 assumption} is to let $s^6(\log p)^2=o(n_k)$ and $r_k=o(n_k^{-1/4})$ for each $k \in \bc$, which is not harsh in the high-dimensional setting. Compared to Theorem 1 in \cite{li2023estimation} under the similar transfer framework, we relaxed their target size assumption $|\bc|n_0 \ll n_\bc$, and our choice for $\lambda_{\bbeta}$ frees the dependence on the unknown parameters $s$ and $h_1$. Moreover, our analysis allows the size of transferable set $|\bc|$ goes to infinity.
\end{remark}

The error bounds of $\wbeta-\sbeta$ in \eqref{l2 error}/\eqref{l1 error} contain two parts. The first part can be seen as the inherent estimation error $\sqrt{s \log p / n_{\bc}}$ weighted by the variance term $\sum_{k \in \{0\} \cup \mathcal{C}} n_k \mathbb{E}[f_i^{(k)}]$ from residual shift. The second part represents the bias term involving each contrast $\| \bdelta^{(k)}\|_1$ from parameter shift weighted by the variance term. The error bounds reveal that to improve the transfer performance, source studies need to possess small $\| \bdelta^{(k)}\|_1$ as well as large $n_k \mathbb{E}[f_i^{(k)}]$, strongly confirming the rationality of our screening criteria \eqref{oracle trans set}. Here we give some quantitative explanations under simplified settings, taking the $\ell_2$-error bound \eqref{l2 error} as the example. 

\begin{itemize}[leftmargin=*, itemsep=0pt]
    \item (\textbf{Parameter Shift}) If the $\tau$-th densities $\mathbb{E}[f_i^{(k)}]$ are neglected and $n_0 = O(n_k), k \in \bc$, then the error bound can be simplified to 
    $\sqrt{s \log p / n_\bc} + (\log p/n_0)^{1/4} \sqrt{\max_{k \in \bc} \| \bdelta^{(k)} \|_1}$. 
    Parallel to the previous transfer theory, the first term $\sqrt{s \log p / n_\bc}$ enjoys a sharper rate than the single-task $\ell_2$-error $\sqrt{s \log p /n_0}$ \citep{belloni2011l1, wang2024analysis}, which indeed reflects the superiority of transfer learning. The second term is a bias term by parameter shift, corresponding to the result in Theorem 1 of \cite{li2023estimation}. 
    \item (\textbf{Residual Shift}) To our knowledge, this non-asymptotic result containing $\tau$-th density $\mathbb{E}[f_i^{(k)}]$ is new to the transfer literature, since we keep the density items along our proving process. For both terms, we can observe clearly from the denominator that the contribution of each study is represented by the product $n_k \mathbb{E}[f_i^{(k)}]$, which reflects the information quality around the objective quantile and reinforces the reasonability of our transferability level $h_2$ defined in \eqref{oracle trans set}. For example, consider the case with only one source study that exhibits substantial residual shift such that $n_1 \mathbb{E}[f_i^{(1)}] < (\sqrt{1 + n_1 /n_0} - 1) n_0 \mathbb{E}[f_i^{(0)}]$, then incorporating this source study will lead to negative transfer.
    \item (\textbf{Covariate Shift}) Our framework can relax the elusive conditions about the covariance matrices across studies (Assumption 4 in \cite{tian2022transfer}, Condition 4 in \cite{zhang2022transfer}, Assumption 4 in \cite{huang2023estimation}). Specifically, our results neither assume the similarity of $\bSigma^{(k)}$, nor involve the factor $C_{\bSigma}$ mentioned above, hence achieving better performance under covariate shift.
\end{itemize}

\begin{corollary} \label{cor:oracle converence rate}
    Under Conditions \ref{cond:restricted eigenvalue}-\ref{cond:density} and assumptions in Theorem \ref{thm:convergence}, denote by $\wbeta_{ora}$ the estimator obtained from \eqref{l1 transfer} with the oracle transferable set $\bc_h$. Then for any constant $\varepsilon>0$,
    \small
    \begin{align*}
        \| \wbeta_{ora} - \sbeta \|_2 \lesssim \sqrt{\frac{s \log p}{n_{\bc_h}}} \frac{n_{\bc_h} h_2}{|\bc_h| n_0} + \sqrt{h_1} \left( \frac{\log p}{n_0} \right)^{\frac{1}{4}}  \sqrt{\frac{n_{\bc_h} h_2}{|\bc_h| n_0}}.
    \end{align*}
    \normalsize
    with probability at least $1- \varepsilon - (|\bc_h|+1)A_1 \exp( - A_2 \log p)$.
\end{corollary}

As shown in Corollary \ref{cor:oracle converence rate}, the terms $h_1$ and $h_2$ regulate the degree of parameter and residual shift, respectively, striking a balance between performance improvement and the risk of negative transfer. For the first term, a sufficient condition to avoid negative transfer, i.e, sharper than $\sqrt{s \log p / n_0}$, is to ensure $h_2 = o(|\bc_h| \sqrt{n_0 / n_{\bc_h}})$, which indicates that the information quality $n_k \mathbb{E}[f_i^{(k)}]$ in the oracle transferable set should exceed the order $\sqrt{n_{\bc_h} n_0} \mathbb{E}[f_i^{(0)}] / |\bc_h|$. The second term highlights the nuanced relationship between the shift levels $h_1$ and $h_2$, with a sufficient condition to avoid negative transfer being $h_1 = o(s \sqrt{\log p / n_0}), h_2 = o(|\bc_h| \sqrt{n_0 / n_{\bc_h}})$. This generalizes the previous result \cite{li2023estimation}, allowing for better error control by managing both parameter and residual shift. To further clarify why incorporating $\tau$-th densities into the upper error bound is meaningful, we provide the following lower bound of $\ell_2$-estimation error in the minimax sense.

\begin{condition} \label{cond:density minimax} 
    Assume the second derivative of the logarithm of the conditional density functions $f^{(k)}(\cdot | \bx)$ exist and are uniformly bounded, i.e, $| [\log f_i^{(k)} (\cdot | \bx)]'' | \le L$ for any $\bx$ and $k \in \{0\} \cup \bc_{h_1}$ with the positive constant $L$.
\end{condition}

\begin{remark} 
    Condition \ref{cond:density minimax} imposes a mild smoothness requirement on the conditional density function, which can be satisfied by many common distributions such as Gaussian, Logistic, and $t$ distributions.
\end{remark}

\begin{theorem} \label{thm:minimax} 
    Assume Condition \ref{cond:sub-gaussian design} and \ref{cond:density minimax} hold. Suppose the parameter space
    \small
    \begin{align*}
        \bTheta (s, h_1) = \{ \boldsymbol{B} = (\bbeta, \bdelta^{(1)}, \dots, \bdelta^{(K)}): \| \bbeta \|_0 \le s, \max_{k \in \bc_{h_1}} \| \bdelta^{(k)} \|_1 \le h_1 \}.
    \end{align*}
    \normalsize
    If $\max \{ s \log p / n_{\bc_{h_1}}, h_1 (\log p / n_0)^{1/2} \} = o(1)$, then
    \small
    \begin{align*}
        \inf_{\wbeta} \sup_{\boldsymbol{B} \in \bTheta(s, h_1)} \mathbb{P} \left( \| \wbeta - \sbeta \|_2 \gtrsim \sqrt{\frac{s \log p}{n_{\bc_{h_1}}}} + \sqrt{\frac{s \log p}{n_0}} \wedge \left( \frac{h_1^2 \log p}{n_0} \right)^{1/4} \wedge h_1 \right) \ge \frac{1}{2},
    \end{align*}
    \normalsize
    where $\wbeta$ is an arbitrary estimator obtained from the data $\{((\bx_{i}^{(k)})^{\top}, y_{i}^{(k)})_{i=1}^{n_k}\}_{k \in \{0\} \cup \bc_{h_1}}$.
\end{theorem}

To understand the lower bound in Theorem \ref{thm:minimax}, we highlight two points. First, when the residual shift is ignored and the transfer procedure \eqref{l1 transfer} is applied to the transferable set $\bc_{h_1}$, the $l_2$-error bound $\sqrt{s \log p / n_{\bc_{h_1}}} + \sqrt{h_1} (\log p / n_0)^{1/4}$ established in Corollary \ref{cor:oracle converence rate} is nearly minimax-optimal. The first term attains the minimax-optimal rate, whereas the second term becomes slower when $h_1 = o(\log p / n_0)$ or $h_1 \gg s \sqrt{\log p / n_0}$, reflecting the limitations of current transfer theory for high-dimensional quantile regression. Second, since the upper bounds are essentially sharp, when the parameter shift within the transferable set is already well controlled (e.g., $h_1 = O(\sqrt{\log p / n_0})$), it remains worthwhile to improve estimation efficiency by further controlling the residual shift. This improvement is particularly meaningful when the sample sizes of source studies are moderate, which underscores the necessity of introducing the additional transferability measure $h_2$ in \eqref{oracle trans set}.

\section{Transferable Set Detection} \label{section3}

Since the oracle transferable set $\bc_h$ may be unknown in practice, here we propose a detection procedure to screen out the source studies satisfying the requirements in \eqref{oracle trans set}. We can first get an initial estimate $\tbeta$ through $\ell_1$-QR on the target study, then remove the part accounted by $\tbeta$ on each source study and conduct $\ell_1$-QR to obtain the contrast estimators $\{\tdelta^{(k)}\}_{k=1}^{K}$. With contrast estimators  $\{\tdelta^{(k)}\}_{k=1}^{K}$, screening studies with small $\| \bdelta^{(k)} \|_1$ is relatively simple. We can directly calculate each $\| \tdelta^{(k)} \|_1$ and set a threshold to reject the dissimilar studies. As implied by Corollary \ref{cor:oracle converence rate}, $h_1$ should be no more than the single-task error \citep{zhang2022transfer, huang2023estimation}, here we set the threshold as $t_1\sqrt{\log p / n_0}$, where $t_1$ is a thresholding parameter. The first screening set is then defined as $\wc_1=\{k: \| \tdelta^{(k)}\|_1 \le t_1\sqrt{\log p / n_0}\}$. We will show that under some regularity conditions, appropriate choice of $t_1$ will lead to consistent selection conforming with the requirement $\| \bdelta^{(k)} \|_1 \le h_1$. 

The second requirement of the oracle transferable set $\bc_h$ is to reject source studies with tiny information $n_k \mathbb{E}[f_i^{(k)}]$ compared to the target $n_0 \mathbb{E}[f_i^{(0)}]$ around the $\tau$-th quantile. We employ a kernel-based approximation using the Powell bandwidth $b_k$ \citep{dai2021inference, huang2023estimation, koenker2005quantile} to estimate each $\mathbb{E}[f_i^{(k)}]$, with the choice of $b_k$ detailed in the supplementary materials. Specifically, for each study we calculate 
\begin{align} \label{density approximation}
	\tf^{(k)} (\tbeta + \tdelta^{(k)}) = \frac{1}{2 b_k n_k} \sum_{i=1}^{n_k} \mathbb{I}\left( | y_i^{(k)}-(\bx_i^{(k)})^\top(\tbeta+\tdelta^{(k)}) | \le b_k \right).
\end{align} 
We then define the second-stage screening set as $\wc_{2} = \{k: n_k \tf^{(k)} (\tbeta + \tdelta^{(k)}) \ge t_2 n_0 \tf^{(0)} (\tbeta) \}$ with the thresholding parameter $t_2$. The threshold $t_2$ ensures that a source study is considered informative as long as its $\tau$-th density is comparable to the target, or its sample size is sufficiently large. We will show that under some regularity conditions, appropriate choice of $t_2$ will lead to consistent selection conforming with the requirement $(n_0 \mathbb{E}[f^{(0)}_i]) / (n_k \mathbb{E}[f^{(k)}_i]) \le h_2$. The final detected transferable set is $\wc = \wc_1 \cap \wc_2$. 

Note that the calculation of estimators $\tbeta, \{\tdelta^{(k)}\}_{k=1}^{K}$ is exactly the first round of iterative procedure for solving the $\ell_1$-minimization transfer framework \eqref{l1 transfer}. Hence here, we propose a unified one-shot algorithm for transferable set detection and parameter prediction, which is summarized in Algorithm \ref{alm:unify}. Note that for complex practical scenarios, such as strong distribution shift or heavy-tailed distributions, the procedure need to be iterated multiple times until the detected set has stabilized.

\begin{algorithm}[ht!] 
\small
\caption{Parameter prediction with transferable set detection}
\label{alm:unify}
\KwIn{Target data $\{((\bx_{i}^{(0)})^{\top}, y_{i}^{(0)})_{i=1}^{n_0}\}$, source data $\{((\bx_{i}^{(k)})^{\top}, y_{i}^{(k)})_{i=1}^{n_k}\}_{k=1}^{K}$, thresholds $t_1$ and $t_2$, tuning parameters $\lambda_{\bbeta}$ and $\{ \lambda_k \}_{0 \le k \le K}$, bandwidths $\{ b_k \}_{0 \le k \le K}$.}
\begin{enumerate}[itemsep=0pt]
    \item Compute $\tbeta \leftarrow \ell_1 \text{-QR } \text{ on } [((\bx_{i}^{(0)})^{\top}, y_{i}^{(0)})_{i=1}^{n_0},\lambda_0]$.
    \item Compute $\tdelta^{(k)} \leftarrow \ell_1 \text{-QR } \text{ on } [ ((\bx_{i}^{(k)})^{\top}, y_{i}^{(k)} - (\bx_{i}^{(k)})^{\top} \tbeta)_{i=1}^{n_k}, \lambda_k ], 1 \le k \le K$.
    \item Compute $\{\tf^{(k)} (\tbeta + \tdelta^{(k)})\}_{0 \le k \le K}$ with bandwidths $\{ b_k \}_{0 \le k \le K}$ via \eqref{density approximation}.
    \item Let $\wc_1=\{k: \| \tdelta^{(k)}\|_1 \le t_1\sqrt{\log p / n_0}\}, \wc_{2} = \{k: n_k \tf^{(k)} (\tbeta + \tdelta^{(k)}) \ge t_2 n_0 \tf^{(0)} (\tbeta ) \}$.
    \item Determine the transferable set $\wc = \wc_1 \cap \wc_2$, then solve the pooled $\ell_1$-QR
    $$
    \wbeta = \underset{\bbeta}{\arg \min}
    \left\{\sum_{k \in \{0\} \cup \wc} L^{(k)} (\bbeta + \tdelta^{(k)}) + \lambda_{\bbeta} \left\| \bbeta \right\|_1 \right\}.
    $$
\end{enumerate}
\KwOut{$\wbeta$.}
\end{algorithm}

We next establish the detection consistency of $\wc$ towards the oracle but unknown transferable set $\bc_h$. To ensure the identifiability of source studies satisfying the oracle requirement in \eqref{oracle trans set}, we need to assume a significant gap of quantile coefficients and $\tau$-th densities, where similar conditions are also assumed in \cite{tian2022transfer, zhang2022transfer, huang2023estimation}.  

\begin{condition} \label{cond:delta gap}
	For each source study $k$, let $\Omega_1^{(k)} := C_3 \frac{s_k}{\mathbb{E}[f_i^{(k)}]} \sqrt{\frac{ \log p}{n_k \wedge n_0}} + 2\eta_k$, where $C_3$ is a positive constant specified in the supplementary materials. Denote the set $\bc_{h_1}:=\{k \in \{1,\dots,K\}: \| \bdelta^{(k)} \| \le h_1\}$. $\bc_{h_1}$ satisfies that
    \small
	\begin{align*}
		\inf_{k \in \bc_{h_1}^c} \left( \| \bdelta^{(k)} \|_1 - \Omega_1^{(k)} \right)  > \sup_{k \in \bc_{h_1}} \left( \| \bdelta^{(k)} \|_1 + \Omega_1^{(k)} \right) . 
	\end{align*}
    \normalsize
\end{condition} 

\begin{condition} \label{cond:density gap}
	For each study $k$, let 
	\small
    \begin{align*}
		\Omega_2^{(k)} := C_4 \left( \sqrt{\frac{s + s_k}{b_k n_k}}  +   \sqrt{\frac{\eta_k}{b_k n_k}} (\frac{\log p}{n_k \wedge n_0})^{-1/4} + \sqrt{\frac{\log p}{n_k}} + s \sqrt{\frac{\log p}{n_0}} + r_k \right),
	\end{align*}
    \normalsize
	where $r_k$ is defined in Theorem \ref{thm:convergence} ($r_0=0$), and $C_4$ is a positive constant specified in the supplementary materials. Denote the set $\bc_{h_2}:=\{k \in \{1,\dots,K\}: (n_0 \mathbb{E}[f_i^{(0)}]) / (n_k \mathbb{E}[f_i^{(k)}]) \le h_2\}$. $\bc_{h_2}$ satisfies that
    \small
    \begin{align*}
		\inf_{k \in \bc_{h_2}} \frac{ n_k ( \mathbb{E}[f_i^{(k)}]  - \Omega_2^{(k)} )}{ n_0 (\mathbb{E}[f_i^{(0)}]  + \Omega_2^{(0)})}  > \sup_{k \in \bc_{h_2}^c} \frac{ n_k ( \mathbb{E}[f_i^{(k)}]  + \Omega_2^{(k)} )}{ n_0 (\mathbb{E}[f_i^{(0)}]  - \Omega_2^{(0)})}.
	\end{align*}
    \normalsize
\end{condition} 

\begin{remark}
    Similar to previous transfer literature \cite{tian2022transfer, zhang2022transfer}, Conditions \ref{cond:delta gap} and \ref{cond:density gap} assume the parameter contrast $\|\bdelta^{(k)}\|_1$ and the population $\tau$-th density $\mathbb{E}[f_i^{(k)}]$ exhibit sufficient theoretical separation between transferable and non-transferable source studies, which exactly satisfies the requirement of the oracle transferable set \eqref{oracle trans set}. To be more specific, the value $\Omega_1^{(k)}$ is the upper error bound of $\| \wdelta^{(k)} - \bdelta^{(k)} \|_1$, while the value $\Omega_2^{(k)}$ is the upper error bound of $|\frac{1}{n_k} \sum_{i=1}^{n_k} \tf_i^{(k)} (\tbeta + \tdelta^{(k)}) - \mathbb{E}[f_i^{(k)}]|$.
\end{remark}

\begin{theorem} \label{thm:detection}
	Assuming Conditions \ref{cond:restricted eigenvalue}-\ref{cond:density gap} and assumptions in Theorem \ref{thm:convergence} for all source studies, choose the thresholding parameters $t_1, t_2$ in Algorithm \ref{alm:unify} as
    \small
    \begin{align*}
        \inf_{k \in \bc_{h_1}^c} \left( \| \bdelta^{(k)} \|_1 - \Omega_1^{(k)} \right) / \sqrt{\frac{\log p}{n_0}}  & \ge t_1 \ge \sup_{k \in \bc_{h_1}} \left( \| \bdelta^{(k)} \|_1 + \Omega_1^{(k)} \right) / \sqrt{\frac{\log p}{n_0}},  \\
        \inf_{k \in \bc_{h_2}} \frac{ n_k ( \mathbb{E}[f_i^{(k)}]  - \Omega_2^{(k)} )}{ n_0 (\mathbb{E}[f_i^{(0)}]  + \Omega_2^{(0)})} & \ge t_2 \ge \sup_{k \in \bc_{h_2}^c} \frac{ n_k ( \mathbb{E}[f_i^{(k)}]  + \Omega_2^{(k)} )}{ n_0 (\mathbb{E}[f_i^{(0)}]  - \Omega_2^{(0)})}.
    \end{align*}
    \normalsize
    Then for the detected transferable set $\wc$ in Algorithm \ref{alm:unify}, for any $\varepsilon > 0$, there exists a positive constant $N(\varepsilon)$ such that when $\min_{k} n_k > N(\varepsilon)$, $\mathbb{P}(\wc = \bc_h) \ge 1 - \varepsilon$.
    Consequently, our practical $\wbeta$ obtained from Algorithm \ref{alm:unify} can enjoy the same $\ell_1/\ell_2$-convergence rates as the oracle estimator $\wbeta_{ora}$ in Corollary \ref{cor:oracle converence rate}.
\end{theorem}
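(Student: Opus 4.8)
The plan is to decompose both the oracle set and the detected set along their two defining criteria and reduce the event $\{\wc = \bc_h\}$ to two marginal matching events. Writing $\bc_h = \bc_{h_1} \cap \bc_{h_2}$ with $\bc_{h_1}, \bc_{h_2}$ as in Conditions~\ref{cond:delta gap}--\ref{cond:density gap}, and noting that Algorithm~\ref{alm:unify} produces $\wc = \wc_1 \cap \wc_2$, it suffices to prove that $\wc_1 = \bc_{h_1}$ and $\wc_2 = \bc_{h_2}$ hold simultaneously with the stated probability; intersecting then gives $\wc = \bc_h$. Each marginal event is driven by a single deviation bound: $\wc_1 = \bc_{h_1}$ by a contrast-estimation bound and $\wc_2 = \bc_{h_2}$ by a density-estimation bound, which are exactly the quantities $\Omega_1^{(k)}$ and $\Omega_2^{(k)}$ appearing in the two gap conditions.

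For the first event I would first establish that, with high probability, $\|\tdelta^{(k)} - \bdelta^{(k)}\|_1 \le \Omega_1^{(k)}$ for every source $k$. Since $\tbeta$ is the target-only $\ell_1$-QR estimator (so $\|\tbeta - \sbeta\|_1 \lesssim s\sqrt{\log p/n_0}$ by the standard high-dimensional QR analysis under Conditions~\ref{cond:density}--\ref{cond:sub-gaussian design}) and $\tdelta^{(k)}$ is the $\ell_1$-QR estimator on the residualized data $y_i^{(k)} - (\bx_i^{(k)})^\top\tbeta$, this bound follows from the single-source branch of the analysis behind Theorem~\ref{thm:convergence}, with the target error propagated into the source fit and the approximate-sparsity remainder $\eta_k$ (Condition~\ref{cond:sparsity}) producing the additive $2\eta_k$. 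On this event, for $k \in \bc_{h_1}$ the triangle inequality gives $\|\tdelta^{(k)}\|_1 \le \|\bdelta^{(k)}\|_1 + \Omega_1^{(k)}$, which the lower bound on $t_1$ forces below $t_1\sqrt{\log p/n_0}$, so $k \in \wc_1$; for $k \in \bc_{h_1}^c$, $\|\tdelta^{(k)}\|_1 \ge \|\bdelta^{(k)}\|_1 - \Omega_1^{(k)}$ exceeds the threshold by the upper bound on $t_1$, so $k \notin \wc_1$. The separation in Condition~\ref{cond:delta gap} is precisely what makes an admissible $t_1$ between the two sides exist, yielding $\wc_1 = \bc_{h_1}$.

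The second event rests on the deviation bound $|\tf^{(k)}(\tbeta + \tdelta^{(k)}) - \mathbb{E}[f_i^{(k)}]| \le \Omega_2^{(k)}$, whose proof is the technical core. I would split the error into three pieces: (i) a plug-in error from replacing $\bw^{(k)} = \sbeta + \bdelta^{(k)}$ by $\tbeta + \tdelta^{(k)}$, controlled through the density upper bound in Condition~\ref{cond:density} together with the $\ell_1$-errors $s\sqrt{\log p/n_0}$ and $r_k$ established above (contributing the last two terms of $\Omega_2^{(k)}$) and the associated band-counting fluctuation (contributing the $b_k$-dependent terms); (ii) a smoothing bias $|\mathbb{E}[\mathbb{I}(|\epsilon_i^{(k)}|\le b_k)/(2b_k)] - \mathbb{E}[f_i^{(k)}]|$ bounded by a Taylor expansion using the uniform bound on the derivative of $f_i^{(k)}$; and (iii) a stochastic fluctuation of the bounded local statistic $\mathbb{I}(|\epsilon_i^{(k)}|\le b_k)/(2b_k)$ around its mean, handled by a Bernstein/sub-exponential concentration inequality. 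With this bound, the same sandwiching argument applied to the ratio $\tf^{(k)}/\tf^{(0)}$, now invoking the separation in Condition~\ref{cond:density gap} and the admissible range for $t_2$, yields $\wc_2 = \bc_{h_2}$.

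Finally I would union-bound the two deviation events over all $K+1$ studies, which accounts for the $2(K+1)A_1\exp(-A_2\log p)$ term, with the additive $\varepsilon$ inherited from the underlying estimation guarantees of Theorem~\ref{thm:convergence}, to conclude $\mathbb{P}(\wc = \bc_h) \ge 1 - \varepsilon - 2(K+1)A_1\exp(-A_2\log p)$. On the event $\{\wc = \bc_h\}$, step~4 of Algorithm~\ref{alm:unify} solves exactly the pooled $\ell_1$-QR over the oracle set, so Corollary~\ref{cor:oracle converence rate} applies verbatim and delivers the claimed $\ell_1/\ell_2$-rates. I expect the main obstacle to be the density bound: because the kernel width $b_k$ makes $\tf^{(k)}$ a local statistic, the high-dimensional plug-in perturbation $(\bx_i^{(k)})^\top(\tbeta + \tdelta^{(k)} - \bw^{(k)})$ interacts with the $1/b_k$ scaling, and cleanly separating the plug-in, bias, and variance contributions so that each lands at its correct rate in $\Omega_2^{(k)}$ — uniformly over the random perturbation — is the delicate part.
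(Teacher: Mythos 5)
Your proposal is correct and follows essentially the same route as the paper: reduce $\{\wc=\bc_h\}$ to the two marginal events $\{\wc_1=\bc_{h_1}\}$ and $\{\wc_2=\bc_{h_2}\}$, drive each by the deviation bounds $\|\tdelta^{(k)}-\bdelta^{(k)}\|_1\le\Omega_1^{(k)}$ and $|\tf^{(k)}-\mathbb{E}[f_i^{(k)}]|\le\Omega_2^{(k)}$ (the latter split into plug-in, smoothing-bias, and Bernstein-type fluctuation pieces, with the uniform control over the random perturbation handled via the sparsity of $\tbeta+\tdelta^{(k)}$, exactly as the paper does by restricting to sparse parameter classes and invoking an empirical-process lemma), then sandwich using Conditions \ref{cond:delta gap}--\ref{cond:density gap} and union-bound over the $K+1$ studies. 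The delicate point you flag — the interaction of the plug-in error with the $1/b_k$ scaling — is precisely where the paper's argument concentrates its effort, so your plan matches both in structure and in where the difficulty lies.
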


\section{Statistical Inference with Knowledge Transfer} \label{section4}

\subsection{Debiased Estimator based on Neyman Orthogonality}

We now turn to the statistical inference for the target parameter, $\sbeta$, focusing specifically on the first component, $\beta^{*}_1$, without loss of generality. Previous studies \citep{tian2022transfer, li2023estimation, huang2023estimation} addressed this inference task building on the principles of debiased lasso \citep{zhang2014confidence, van2014asymptotically}. Unfortunately, these methods only achieve $\sqrt{n_0}$-normality for the debiased estimators, offering little improvement over debiasing based solely on the target study. The key reason for this limitation is that their debiasing procedures need the covariance matrix to be identical for all samples, a condition that is violated when covariate shift occurs. Also, they do not offer any measures to adjust for the parameter contrast $\bdelta^{(k)}$. The $\sqrt{n_0}$-normality is suboptimal, especially given the wealth of transferable information from source studies. The challenge, then, is how to effectively leverage the source information to enhance inference efficiency, particularly under distributional shift.

To address this challenge, we propose a novel approach motivated by the Neyman orthogonal idea \citep{belloni2019valid}. Denote $\bv_{-1}$ as the last $(p-1)$-dimensions of the vector $\bv \in \mathbb{R}^{p}$. We can rewrite the original model \eqref{qr} as a partially linear quantile regression
\begin{align*} 
	y_i^{(k)}=x_{i,1}^{(k)} \beta^*_1 + (\bx_{i,-1}^{(k)})^\top \bw_{-1}^{(k)} + r_i^{(k)} + \epsilon_i^{(k)}, \quad i=1, \ldots, n_k,
\end{align*}
where $r_i^{(k)}=x_{i,1}^{(k)}\delta^{(k)}_1$ is regarded as an approximation error. A natural moment condition for identifying $\beta^*_1$ is to solve the equation (in $\alpha$)
\begin{align} \label{moment condition}
	\mathbb{E} \sum_{i=1}^{n_k} \left[ \mathbb{I} \left\{y_i^{(k)} \le x_{i,1}^{(k)} \alpha + (\bx_{i,-1}^{(k)})^\top \bw_{-1}^{(k)} + r_i^{(k)} \right\} - \tau \right] x_{i,1}^{(k)} = 0.
\end{align}
The high-dimensional nuisance parameter $\bw_{-1}^{(k)}$ will bring trouble for identifying $\beta^*_1$, since the first-order derivative of \eqref{moment condition} with respect to this nuisance parameter at $\alpha=\beta^*_1$ is non-zero. If conducting valid inference on $\beta^*_1$ directly with moment condition \eqref{moment condition}, the estimator for $\bw_{-1}^{(k)}$ needs to converge faster than $1/\sqrt{n_k}$, which is difficult to achieve. To address this problem, we adopt the Neyman orthogonality to make the moment condition immune to the first-order error from the nuisance estimator. The technique is based on the linear projection of the regressor of interest $x_{i,1}^{(k)}$ on the $\bx^{(k)}_{i,-1}$ weighted by the $\tau$-th density $f_i^{(k)}$:
\begin{align} \label{projection}
	f_i^{(k)}  x_{i,1}^{(k)} = f_i^{(k)}  (\bx^{(k)}_{i,-1} )^{\top} \btheta_0^{(k)} + v_{0i}^{(k)}, \quad i=1, \ldots, n_k,
\end{align}
where the coefficient $\btheta_0^{(k)} \in \arg \min_{\btheta} \mathbb{E} \sum_{i=1}^{n_k} [f_i^{(k)} x_{i,1}^{(k)} - f_i^{(k)} ( \bx^{(k)}_{i,-1} )^{\top} \btheta ]^2$, satisfying the relationship $\mathbb{E}\sum_{i=1}^{n_k}[ f_i^{(k)}  \bx^{(k)}_{i,-1}v_{0i}^{(k)}]=0$. We then construct the orthogonal score function 
\begin{align} \label{score function}
	\psi_i^{(k)}(\alpha, \bw_{-1}^{(k)}, \btheta_0^{(k)}, f_i^{(k)}):=\left[\mathbb{I} \left\{ y_i^{(k)} \le x_{i,1}^{(k)} \alpha + (\bx_{i,-1}^{(k)})^\top \bw_{-1}^{(k)} + r_i^{(k)} \right\} - \tau\right] v_{0i}^{(k)},
\end{align} 
and solve for $\mathbb{E} \sum_{i=1}^{n_k} [ \psi_i^{(k)}(\alpha, \bw_{-1}^{(k)}, \btheta_0^{(k)}, f_i^{(k)})]=0$, where this moment condition can satisfy the orthogonality with respect to high-dimensional nuisance parameters $\bw_{-1}^{(k)}$ and $\btheta_0^{(k)}$, i.e.
\begin{align*}
	\left.  \partial_{\bw} \mathbb{E}  \sum_{i=1}^{n_k} \left[ \psi_i^{(k)}(\beta_1^*, \bw, \btheta_0^{(k)}, f_i^{(k)})\right]  \right| _{\bw=\bw_{-1}^{(k)}} = \left.  \partial_{\btheta} \mathbb{E} \sum_{i=1}^{n_k} \left[ \psi_i^{(k)}(\beta_1^*, \bw_{-1}^{(k)}, \btheta, f_i^{(k)})\right]  \right| _{\btheta=\btheta_0^{(k)}}  = 0.
\end{align*}
Consequently, the estimators for $\bw_{-1}^{(k)}$ and $\btheta_0^{(k)}$ will only introduce a second-order bias, allowing the use of high-dimensional sparse estimates. Note that this approach can avoid the computation of precision matrices and free the identical covariance matrix assumption across studies, which exactly accommodates the circumstance under distribution shift. 

We outline several points here for practical implementation. Firstly, we estimate the $\tau$-th density $f_i^{(k)}$ by $\wf_i^{(k)} = 2h^{(k)} / [(\bx_{i}^{(k)})^{\top}\wbeta^{(k)}_{\tau+h^{(k)}} - (\bx_{i}^{(k)})^{\top}\wbeta^{(k)}_{\tau-h^{(k)}}]$, where $\wbeta^{(k)}_{\tau \pm h^{(k)}}$ are computed via $\ell_1$-QR at the $(\tau \pm h^{(k)})$-th quantiles with tuning parameters $\lambda_\tau^{(k)}$ on the $k$-th study respectively. The bandwidth parameter, $h^{(k)}$, is set to $\min\{n_k^{-1/6},\tau(1-\tau)/2\}$, as suggested by \cite{koenker2005quantile, giessing2023debiased}.  Next, each high-dimensional projection parameter $\btheta_0^{(k)}$ is determined by $\ell_1$-penalized weighted least squares with tuning parameters $\lambda_{\btheta}^{(k)}$
\begin{align}
	\label{projection estimate}
	\wtheta^{(k)} \in \arg \min_{\btheta} \sum_{i=1}^{n_k} \left[\wf^{(k)}_i x_{i,1}^{(k)} - \wf^{(k)}_i (\bx^{(k)}_{i,-1})^{\top}\btheta \right]^2 + \lambda_{\btheta}^{(k)}\| \btheta \|_1.
\end{align}
For each high-dimensional nuisance parameter $\bw_{-1}^{(k)}$, we calculate the post-selected QR estimator (denoted by $\tw_{-1}^{(k)}$), since the refitting step after feature screening can lead to better finite sample behaviors \citep{belloni2019valid}. Specifically, we screen each source study's covariates by setting threshold for each element of $\wbeta + \tdelta^{(k)}$ obtained in Algorithm \ref{alm:unify}, then apply traditional QR on the selected features. Denote the empirical residuals $\wv_i^{(k)} = \wf^{(k)}_i x_{i,1}^{(k)} - \wf^{(k)}_i (\bx^{(k)}_{i,-1})^{\top} \wtheta^{(k)}$. The empirical score function corresponding to \eqref{score function} is defined as
\begin{align}
    \label{empirical score function}
	\wpsi_i^{(k)} ( \alpha, \tw_{-1}^{(k)}, \wtheta^{(k)}, \wf_i^{(k)}):= \left[ \mathbb{I} \left\{y_i^{(k)} \le x_{i,1}^{(k)} \alpha + (\bx_{i,-1}^{(k)})^\top \tw^{(k)}_{-1} \right\} - \tau \right]  \wv_i^{(k)}.
\end{align}
To tackle parameter and residual shift, we only incorporate the score functions of sources within the set $\wc$ from Algorithm \ref{alm:unify}. Finally, we set the search region $\walpha \in \mathcal{A}_{\wc} := \{ \alpha \in \mathbb{R}:| \alpha - \widehat{\beta}_1| \le 10 [ n_{\wc}^{-1} \sum_{k \in \{0\} \cup \wc} \sum_{i=1}^{n_k} (x^{(k)}_{i,1})^2  ]^{-1/2} / \log n_{\wc} \}$, as suggested by \cite{belloni2019valid}. The whole inference procedure is summarized in Algorithm \ref{alm:inference}. 

\begin{algorithm}[ht]
\small
\caption{Inference for $\beta^{*}_{1}$ with knowledge transfer}
\label{alm:inference}
\KwIn{Detected transferable set $\wc$ and estimators $\wbeta, \{\tdelta^{(k)}\}_{k \in \wc}$ from Algorithm \ref{alm:unify}, \qquad target study $\{((\bx_{i}^{(0)})^{\top}, y_{i}^{(0)})_{i=1}^{n_0}\}$, source studies $\{((\bx_{i}^{(k)})^{\top}, y_{i}^{(k)})_{i=1}^{n_k}\}_{k \in \wc}$, \qquad tuning parameters $\{\lambda_{\tau}^{(k)}, \lambda_{\btheta}^{(k)}\}_{k \in \{0\} \cup \wc}$, threshold parameter $\bar{\lambda}$.}
\begin{enumerate}[itemsep=0pt] 
    \item For $k \in \{0\} \cup \wc$, obtain the density estimate $\wf_{i}^{(k)}$ and conduct the projection \eqref{projection estimate}. 
    \item For $k \in \{0\} \cup \wc$, compute $\ww^{(k)}$ by traditional QR on $\{((\bx_{i,\widehat{S}^{(k)}}^{(k)})^{\top}, y_{i}^{(k)})_{i=1}^{n_k}\}$, where the index set $\widehat{S}^{(k)}=\{j: |\widehat{\beta}_j + \widetilde{\delta}^{(k)}_j| > \bar{\lambda} \}$, then enlarge $\ww^{(k)}$ (with zero) to $\widetilde{\bw}^{(k)} \in \mathbb{R}^{p}$.
    \item Incorporate empirical score functions \eqref{empirical score function} and solve for
    \begin{align*}
        \walpha \in \underset{\alpha \in \mathcal{A}_{\wc}}{\arg \min} 
        \left| \sum_{k \in \{0\} \cup \wc} \sum_{i=1}^{n_k} \wpsi_i^{(k)}(\alpha, \tw_{-1}^{(k)}, \wtheta^{(k)}, \wf_i^{(k)}) \right|.
    \end{align*}
\end{enumerate}
\KwOut{Debiased estimator $\walpha$.}
\end{algorithm}

\subsection{Asymptotic Normality for Debiased Estimator}

As in Theorem \ref{thm:convergence}, we assume an arbitrarily specified transferable set $\bc$. Let $c, C$ be generic constants which can be different at different places.

\begin{condition} \label{cond:moment conditions} 
	For each study $k \in \{0\} \cup \bc$, (i) $\mathbb{E} [ (\xi^\top \bx^{(k)}_{i,-1})^2 (r_i^{(k)})^2] \le C \| \xi \|_2^2 \mathbb{E} [(r_i^{(k)})^2]$, for any $\xi \in \mathbb{R}^{p-1}$. (ii) $\sum_{k \in \bc} \frac{n_k}{\nc} \mathbb{E}[(r_i^{(k)})^2] = o(1/\sqrt{\nc})$. (iii) $\sum_{k \in \bc} \frac{n_k}{\nc} |\mathbb{E}[f_i^{(k)} v_{0i}^{(k)} r_i^{(k)}]| = o(1/\sqrt{\nc})$.
\end{condition}

\begin{condition} \label{cond:weighted lasso} 
    For each study $ k \in \{0\} \cup \bc$, (i) $\| \btheta_0^{(k)} \|_2 \le C$; (ii) There exist sparsity levels $\{ s_k^* \}_{k \in \{0\} \cup \bc}$ and vectors $\{\btheta^{(k)}\}_{k \in \{0\} \cup \bc}$ such that $s_k^*=\max \{s + s_k, \|\btheta^{(k)}\|_0\}$, where each $\btheta^{(k)}$ satisfies $(\bx_{i,-1}^{(k)})^\top \btheta_0^{(k)} = (\bx_{i,-1}^{(k)})^\top \btheta^{(k)} + r_{\btheta i}^{(k)}$ with $ \| \btheta_0^{(k)} - \btheta^{(k)} \|_1 = C s_k^* \sqrt{\log p / n_k}$, $\mathbb{E}[(r_{\btheta i}^{(k)})^2] \le C s_k^* / n_k$; (iii) $c \le \min_{1 < j \le p}\mathbb{E} [ (f_i^{(k)} x^{(k)}_{i,j}v_{0i}^{(k)})^2 ]^{1/2} \le \max_{1 < j \le p}\mathbb{E} [(f_i^{(k)} x^{(k)}_{i,j}v_{0i}^{(k)})^3]^{1/3} \le C$. 
\end{condition}

\begin{condition} \label{cond:density estimation}
	For each study $ k \in \{0\} \cup \bc$, (i) Conditional density functions $f(\cdot|\bx_i^{(k)})$ are uniformly lower bounded by a positive constant $\underline{f}$; (ii) For $u = \tau \pm h^{(k)}$, assume that $u$-quantile$(y_i^{(k)}|\bx_i^{(k)}) = (\bx_i^{(k)})^\top \bw^{(k)}_u + \epsilon_{ui}^{(k)}$, where $\mathbb{E}[\epsilon_{ui}^{(k)}] \le o( n_k^{-1/2} )$, $| \epsilon_{ui}^{(k)} | \le o(h^{(k)})$, $\max_i \| \bx_i^{(k)} \|_\infty \sqrt{(s_k^*)^2 \log p / n_k} = o(h^{(k)})$ and $\| \bw^{(k)}_u \|_0 \le C s_k^*$.
\end{condition}

Conditions \ref{cond:moment conditions}-\ref{cond:density estimation} generalize the assumptions of single-study version in \cite{belloni2019valid} to the transfer setting. Condition \ref{cond:moment conditions} set up the moment requirements for approximation error $r_i^{(k)}$, where each source's quantity $r_i^{(k)}$ is linked to the whole sample size $\nc$. Condition \ref{cond:weighted lasso} follows the original settings, providing the necessary requirements for establishing convergence rates of weighted Lasso estimators $\{\wtheta^{(k)}\}_{k \in \{0\} \cup \bc}$. Condition \ref{cond:density estimation} is for the estimation of the conditional density at $(\tau \pm h^{(k)})$-th quantiles, where the approximately sparse QR model is assumed around the quantile index $\tau$.  

\begin{remark}
    Since our inference framework perform debiasing on the transferable source studies, this naturally requires the debiased component across studies to be very similar, which is relatively strict than the similarity requirement for estimation in Theorem \ref{thm:convergence}. As in Condition 7 (iii), each term $\mathbb{E} [f_i^{(k)}v_{0i}^{(k)}r_i^{(k)}] = \mathbb{E} [(v_{0i}^{(k)})^2 ] \delta_1^{(k)}$ needs be sufficient small with order $o(1/\sqrt{n_{\bc}})$. Nevertheless, this assumption is reasonable to hold in many practical scenarios. For instance, in multi-source genetic studies, researchers are often interested in estimating the common effect of a specific genetic variant on a phenotype across different sources \citep{michailidou2017association, kunkle2019genetic}. In practice, we further propose a quality control procedure in Section 8.1.4 of supplementary materials. The procedure ensures that if the similarity of the coefficient of interest is not well supported, one should instead debias on the target study only.
\end{remark}

\begin{theorem} \label{thm:inference}
	Assume Conditions \ref{cond:moment conditions}-\ref{cond:density estimation} and all conditions required by Theorem \ref{thm:convergence}. For each study $k \in \{0\} \cup \bc$, let $\lambda_\tau^{(k)} = c_\lambda \sqrt{n_k \log p}$, where $c_\lambda$ is a large enough constant. If the parameters satisfy the following growth conditions
	\small
    \begin{equation} \label{Theorem 4 assumption}
	\begin{aligned}
		& \frac{s \log p}{\nc} + \sum_{k \in \mathcal{C}} \frac{n_k}{\nc} \sqrt{\frac{\log p}{n_k \wedge n_0}} \| \bdelta^{(k)}\|_1= o(1/\sqrt{\nc}), \\
		& \sum_{k \in \{0\} \cup \bc} \frac{n_k}{\nc} \left[ \frac{1}{h^{(k)}} \sqrt{\frac{s_k^* \log p}{n_k}} + (h^{(k)})^2 +  \frac{\lambda_{\btheta}^{(k)} s_k^*}{n_k} \right]= o(\nc^{-1/4}), \\
		&  \max_{k \in \{0\} \cup \bc} \left\{ s_k^* + \frac{n_k s_k^* \log p}{ (h^{(k)}\lambda^{(k)}_{\btheta})^2} + \left( \frac{n_k (h^{(k)})^2}{\lambda^{(k)}_{\btheta}} \right)^2 \right\} \frac{\log p}{\nc} = o(1/\sqrt{\nc}),
	\end{aligned}
    \end{equation}
	\normalsize
	then for the estimator $\walpha$ obtained from Algorithm \ref{alm:inference} with the transferable set $\bc$, we have
    \small
    \begin{equation*}
	\begin{aligned}
		& \qquad \qquad \qquad \sigma_{\bc}^{-1} \sqrt{\nc} (\walpha - \beta_1^*) = \mathbb{U}_{\bc}(\tau) + o_p (1), \qquad \mathbb{U}_{\bc}(\tau)  \leadsto \mathcal{N}(0,1), \\
        & \text{where} \quad \mathbb{U}_{\bc}(\tau)  = \left( \tau (1 - \tau) \sum_{k \in \{0\} \cup \bc} \frac{n_k}{n_{\bc}} \mathbb{E}[(v_{0i}^{(k)})^2] \right)^{-1/2} \cdot \frac{1}{\sqrt{\nc}} \sum_{k \in \{0\} \cup \bc} \sum_{i=1}^{n_k} \psi_i^{(k)}(\alpha, \bw_{-1}^{(k)}, \btheta_0^{(k)}, f_i^{(k)}), \\
        & \text{and} \quad \sigma_{\bc}^{2} = \left(\sum_{k \in \{0\} \cup \bc} \frac{n_k}{n_{\bc}} \mathbb{E}[f_i^{(k)}x_{i,1}^{(k)}v_{0i}^{(k)}] \right)^{-2} \cdot \sum_{k \in \{0\} \cup \bc} \frac{n_k}{n_{\bc}} \mathbb{E} [\tau(1-\tau)(v_{0i}^{(k)})^2] .
	\end{aligned}
    \end{equation*}
    \normalsize
	Moreover, the estimator $\widehat{\sigma}_{\bc}^{2} = \tau (1 - \tau) n_\bc [\sum_{k \in \{0\} \cup \bc} \sum_{i=1}^{n_k} (\wv_{i}^{(k)})^2 ]^{-1}$ is consistent.
\end{theorem} 

\begin{remark}
    As implied by the minimax lower bound in \cite{li2023transfer}, the $\sqrt{n_{\bc}}$-rate becomes theoretically possible only when the similarity conditions are linked to the total sample size $n_{\bc}$ instead of $n_0$ only. As shown in the parameter growth conditions stated in \eqref{Theorem 4 assumption}, we relate the parameter growth of each individual study to the total sample size $n_{\bc}$. Owing to the Neyman orthogonal technique we employ, these requirements are not stringent, where a sufficient condition is to let $h^{(k)} \asymp n_k^{-1/6}, \lambda_{\btheta}^{(k)} \asymp \sqrt{n_k \log p }, s_k^* \log p = o(n_k^{2/3} / \sqrt{n_{\bc}}), \| \bdelta^{(k)} \|_1 \log p  = o(\sqrt{n_k \wedge n_0} / \sqrt{n_{\bc}})$ for each study $k \in \{0\} \cup \bc$.
\end{remark}

With the asymptotic normality of $\walpha$ and consistent variance estimators $\widehat{\sigma}_{\bc}^2$ provided in Theorem \ref{thm:inference}, we can construct valid confidence intervals and hypothesis test procedures for the objective component $\beta^*_1$. Now we give some explanations for results in Theorem \ref{thm:inference}. Firstly, our estimator $\walpha$ enjoys $\sqrt{\nc}$-normality, which is much sharper than the previous results, demonstrating the effectiveness of our debiased transfer methods. Secondly, conditions of Theorem \ref{thm:inference} allow for nonidentical covariate distributions across studies, as well as robust to the divergence of covariance matrices between target and source studies, demonstrating our methods' superiority under covariate shift. Moreover, note that the asymptotic variance $\sigma_{\bc}^2$ includes $v_{0i}^{(k)}$ that is inversely proportional to the $\tau$-th density $f_i^{(k)}$, clearly revealing that residual shift may increase the variance and deteriorate the inference efficiency, and that's why we choose to only debias on the detected transferable set. 

\section{Simulation} \label{section5}

\subsection{Estimation Errors}

We follow a similar setup as in \cite{zhang2022transfer, huang2023estimation}. Consider a target study with size $n_0=200$, dimension $p=500$, and sparsity level $s=10$. We simulate two types of target model:
\small
\begin{align} \label{simu model}
	\text{Homoscedastic: }  y_i^{(0)} = ( \bx_i^{(0)} )^{\top} \sbeta + \epsilon_i^{(0)},  ~
	\text{Heteroscedastic: } y_i^{(0)} = ( \bx_i^{(0)} )^{\top} \sbeta + |x_{i,1}^{(0)}| \epsilon_i^{(0)},
\end{align}
\normalsize
where $\sbeta = (\mathbf{1}_s, \mathbf{0}_{p-s})^\top$, $\bx_i^{(0)} \sim \mathcal{N}(\mathbf{0}_p, \boldsymbol{\Sigma}_{\bx})$ with $\boldsymbol{\Sigma}_{\bx} = (0.7^{|i-j|})_{1 \leq i, j \leq p}$, and $\epsilon_i^{(0)}$ follows the shifted standard normal distribution with quantile levels $\tau \in \{0.2, 0.5, 0.7\}$. The source study number $K=5$ and follows the homoscedastic/heteroscedastic model type \eqref{simu model} with $(y_i^{(k)}, \bx_i^{(k)}, \bw^{(k)}, \epsilon_i^{(k)})_{i=1}^{n_k}$. To simulate the distribution shift, we consider the following parameter configurations for the source studies. 

\begin{itemize}[leftmargin=*]
    \item \textbf{Covariate shift.} For each source study, we set the covariates $\{\bx_i^{(k)}\}_{i=1}^{n_k}$ follow $\mathcal{N} ( \mathbf{0}_p, \boldsymbol{\Sigma}_{\bx} + \boldsymbol{\epsilon}_{\bx}^{(k)} \boldsymbol{\epsilon}_{\bx}^{(k) \top})$ with $\boldsymbol{\epsilon}_{\bx}^{(k)} \sim \mathcal{N}(\mathbf{0}_p, 0.3^2 \boldsymbol{I}_p)$, then normalize their variance to one. 
    \item \textbf{Parameter shift.} For each source study, denote $\{\zeta_j^{(k)}\}_{j=1}^p$ as independent Rademacher random variables (taking values in $\{1,-1\}$ with equal probability). Following the notation in Section \ref{section3}, we denote the first oracle set $\bc_{h_1}$, in which the $\ell_1$-norm of parameter contrast $\|\bdelta^{(k)}\|_1 = \| \bw^{(k)} - \sbeta \|_1 $ of each source study is relatively small. We set
	\begin{align*}
	    w_j^{(k)}= \begin{cases} \beta^*_j + h_1 / 100 \cdot \zeta_j^{(k)} \cdot \mathbb{I}\left\{j \in H^{(k)} \cup [s/2]\right\}, & k \in \bc_{h_1}, \\ h_1 / 10 \cdot \zeta_j^{(k)} \cdot \mathbb{I}\left\{j \in H^{(k)} \cup[s/2]\right\}, & k \in \bc_{h_1}^c,\end{cases}
	\end{align*}
	where $[s/2] = \{1, \ldots, s/2\}$, $H^{(k)}$ is a random subset chosen from $\{s/2+1, \ldots, p\}$ with $|H^{(k)}|=50$. It can be verified that $\max_{k \in \bc_{h_1}}\|\bdelta^{(k)}\|_1 < h_1 < \min_{k \in \bc_{h_1}^c}\|\bdelta^{(k)}\|_1$. In this setting, we consider $h_1=5$. 
	\item \textbf{Residual shift.} For each source study $k=1, \ldots, K$, the residual $\epsilon_i^{(k)}$ either follows the standard normal distribution, or it follows one type of the following abnormal distributions: 1) (Cauchy) cauchy distribution: $\mathcal{C}(0,3)$; 2) (Mixed) mixed gaussian: $z\mathcal{N}(-3,0.5) + (1-z)\mathcal{N}(3,0.5)$, where $z \sim \mathcal{B}ernoulli(\tau)$; 3) (Noisy) noisy setting: $\mathcal{N}(0,5^2)$. All distributions have already been shifted to meet $\mathbb{P}(\epsilon_i^{(k)} \le 0|\bx_i^{(k)})=\tau$. We denote the set containing all standard normal source studies by $\bc_{h_2}$, where we can let $h_2=4$ satisfy $\max_{k \in \bc_{h_2}} \mathbb{E}[f_i^{(0)}] / \mathbb{E}[f_i^{(k)}] < h_2 < \min_{k \in \bc_{h_2}^c} \mathbb{E}[f_i^{(0)}] / \mathbb{E}[f_i^{(k)}]$. To clearly demonstrate the negative impact of residual shift, we set the sample size to $n_k = 100 \cdot \mathbb{I}\{k \in \bc_{h_2}\} + 200 \cdot \mathbb{I}\{k \in \bc_{h_2}^c\}$. 
\end{itemize}

Based on the definition in \eqref{oracle trans set}, we define $\bc_{h} = \bc_{h_1} \cap \bc_{h_2}$ as the transferable oracle set. To capture both parameter and residual shift in one setting, we vary the parametric transferable set $\bc_{h_1}$ from 1 to 5, and assume that each source study follows a divergent distribution with probability 1/2. For example, in the ``Cauchy'' setting, each $\epsilon_i^{(k)}$ either follows a normal distribution like $\epsilon_i^{(0)}$ or a Cauchy distribution with equal probability. The same applies to the ``Mixed'' and ``Noisy'' settings. We compare six specific methods, which are: 1) \emph{L1-QR}: conduct $\ell_1$ quantile regression only on the target data (baseline); 2) \emph{Pooling}: Algorithm \ref{alm:unify} (without detection) on all source studies; 3) \emph{Oracle}: Algorithm \ref{alm:unify} (without detection) on the oracle transferable set $\bc_{h}$; 4) \emph{TransQR}: Algorithm \ref{alm:unify}, our transfer approach; 5) \emph{Oracle\_PS}: Algorithm \ref{alm:unify} (without detection) on $\bc_{h_1}$, the oracle choice in the previous literature; 6) \emph{TransQR\_2step}: two-step transfer approach proposed in \cite{huang2023estimation}. The choices of all tuning parameters are provided in the supplementary materials.  

We evaluate the performance of each method in terms of the $\ell_2$-error $\| \wbeta - \bbeta \|_2^2$. We repeat the experiment for 100 times and average the results for each setting respectively. The average $\ell_2$-error of six methods under homoscedastic model are shown in Figure \ref{simu:est_homo_normal}. More extensive simulation results, including scenarios with strong covariate shift and heavy-tailed target distributions, are provided in Section 8.2 of supplementary materials.

\begin{figure}[ht!] 
    \begin{center}
        \includegraphics[width=0.75\linewidth]{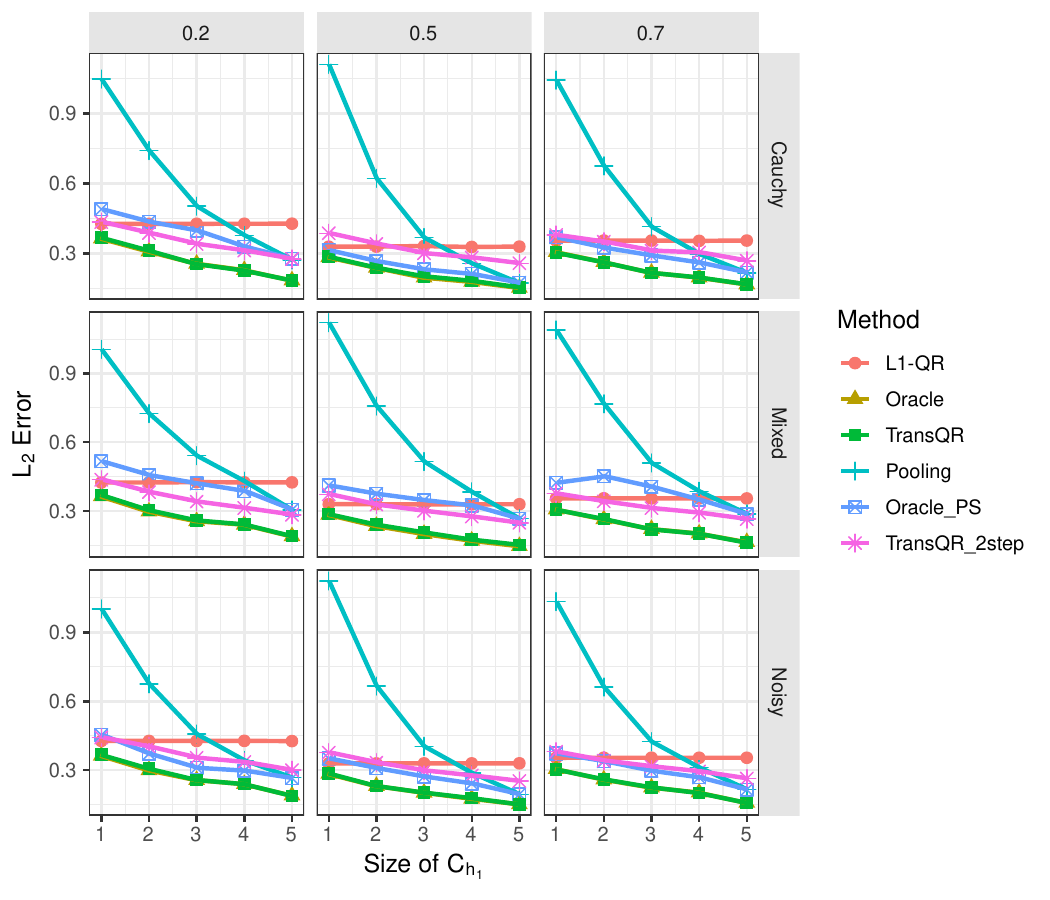}
    \end{center}
    \caption{Average $\ell_2$-error of various methods under homoscedastic model.}
    \label{simu:est_homo_normal}
\end{figure}

From Figure \ref{simu:est_homo_normal}, we highlight three points. First, as expected, the \emph{Oracle} approach achieves the best performance across all scenarios, whereas the \emph{Pooling} approach performs the worst when transferable source studies are scarce. This finding aligns with our theoretical motivation that knowledge transfer should rely on informative sources; otherwise, negative transfer may occur. Second, the \emph{Oracle\_PS} method underperforms relative to the \emph{Oracle} method across all settings, indicating that residual shift indeed affects transfer performance. The only difference between the two methods is that \emph{Oracle\_PS} accounts solely for parameter shift and includes sources with heterogeneous residual distributions. The superior performance of \emph{Oracle} confirms that our proposed metric $n_k \mathbb{E}[f_i^{(k)}]$ serves as a crucial measure for controlling residual shift. Finally, it is noteworthy that our proposed method, \emph{TransQR}, performs almost identically to the oracle benchmark, demonstrating the effectiveness of our transferable set detection under distribution shift. In contrast, the two-step approach \emph{TransQR\_2step} performs unsatisfactorily, revealing its limited ability to handle residual shift and its sensitivity to covariate shift.

\subsection{Inference Results} \label{section5.2}

To give some intuitive understandings of Algorithm \ref{alm:inference}, we conduct simulations for statistical inference on $\beta^*_1$ under the heteroscedastic model setting in \eqref{simu model} with some adaptive changes. We let $|\bc_{h_1}|=5, h_1=10$ with $|\delta_1^{(k)}|=0.01$ (controlling approximation error $r_i^{(k)}$), also let $n_0=n_{k | k \in \bc_{h_2}}=300$ and $n_{k | k \in \bc_{h_2}^c}=500$. In the simulations above we have verified the detection consistency, hence for computational simplicity we assume $\bc_{h_2}:=\{1,2,3\}$ is known, and assume the set in $\bc_{h_2}^c$ follows the ``Mixed'' setting. 

We compare five specific methods, including: 1) \emph{Debias\_tar}: debiasing on the target study by Algorithm \ref{alm:inference} with the \emph{L1-QR} estimate (baseline); 2) \emph{Debias\_trans\_pool}: debiasing on all studies by Algorithm \ref{alm:inference} with the \emph{Pooling} estimate; 3) \emph{Debias\_trans\_tar}: debiasing on the target study by Algorithm \ref{alm:inference} with the \emph{TransQR} estimate; 4) \emph{Debias\_trans}: debiasing on the target and transferable source studies by Algorithm \ref{alm:inference} with the \emph{TransQR} estimate; 5) \emph{Debias\_dl}: debiased Lasso approach proposed in \cite{huang2023estimation} with the \emph{TransQR\_2step} estimate. Note that \emph{Debias\_trans} is our primary approach for comparison. The experiment was repeated 1,000 times at $\tau \in \{0.2, 0.5, 0.7\}$. The choices of all tuning parameters are provided in the supplementary materials. We evaluate the inference efficiency via multiple dimensions. Table \ref{tab:infer_normal} reports the coverage probabilities of 95\% confidence intervals, the average absolute bias, the empirical standard error, and the average estimated standard error. Figure \ref{simu:infer_normal} further illustrates these results through boxplots of bias, boxplots of confidence interval lengths, and density plots of normalized estimates compared with $\mathcal{N}(0,1)$.

\begin{table}[ht]
\footnotesize
\centering
\renewcommand{\arraystretch}{0.8}
\setlength{\tabcolsep}{2.5pt}
\captionsetup{skip=3pt}
\caption{Coverage probability of $95\%$ confidence intervals (CP), average absolute bias (Bias), standard error (SE), and average estimated standard error (ESE) of various methods.}
\label{tab:infer_normal}
\begin{tabular}{@{}l *{12}{c} @{}}
    \toprule
    \multirow{2}{*}{\normalsize{Method}} & \multicolumn{4}{c}{$\tau=0.2$} & \multicolumn{4}{c}{$\tau=0.5$} & \multicolumn{4}{c}{$\tau=0.7$} \\
    \cmidrule(lr){2-5} \cmidrule(lr){6-9} \cmidrule(l){10-13}
    & CP & Bias & SE & ESE & CP & Bias & SE & ESE & CP & Bias & SE & ESE \\
    \midrule
    \emph{Debias\_tar} & 0.965 & 0.101 & 0.123 & 0.138 & 0.954 & 0.0857 & 0.108 & 0.110 & 0.953 & 0.0880 & 0.110 & 0.117 \\
    \emph{Debias\_trans\_pool} & 0.906 & 0.0654 & 0.0812 & 0.0735 & 0.930 & 0.0543 & 0.0666 & 0.0627 & 0.913 & 0.0598 & 0.0704 & 0.0657 \\
    \emph{Debias\_trans\_tar} & 0.960 & 0.0966 & 0.123 & 0.138 & 0.961 & 0.0830 & 0.103 & 0.110 & 0.957 & 0.0867 & 0.109 & 0.117 \\
    \emph{Debias\_trans} & 0.948 & 0.0625 & 0.0755 & 0.0804 & 0.960 & 0.0522 & 0.0642 & 0.0677 & 0.948 & 0.0572 & 0.0683 & 0.0710 \\
    \emph{Debias\_dl} & 0.936 & 0.0865 & 0.109 & 0.104 & 0.939 & 0.0695 & 0.0874 & 0.0876 & 0.936 & 0.0756 & 0.0942 & 0.0910 \\
    \bottomrule
\end{tabular}
\end{table}

\begin{figure}[!ht] 
\begin{center}
    \includegraphics[width=0.7\linewidth]{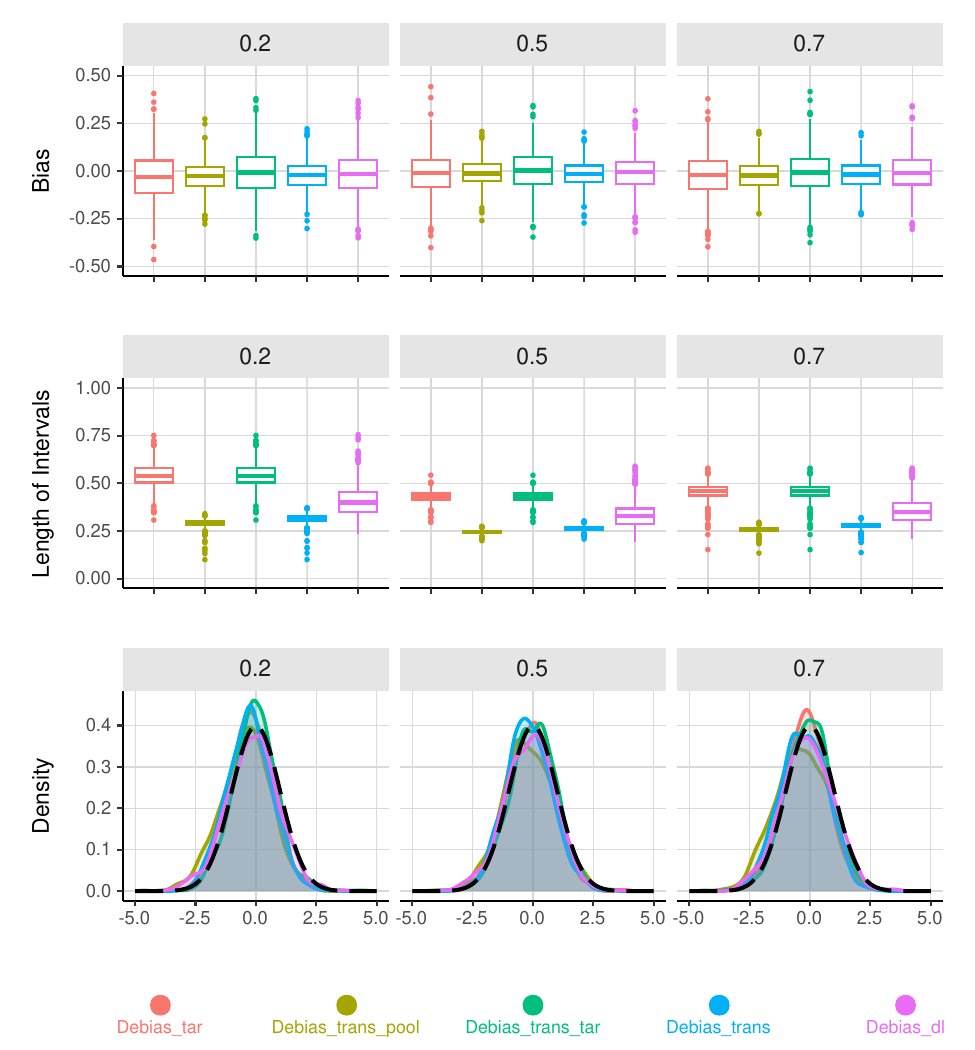}
\end{center}
\caption{Inference performance of various methods, including boxplots of bias, boxplots of confidence interval lengths, and density plots of normalized estimates compared to $\mathcal{N}(0,1)$ with black dotted line.}
\label{simu:infer_normal}
\end{figure} 

From Table \ref{tab:infer_normal} and Figure \ref{simu:infer_normal}, we highlight three points. Firstly, our method \emph{Debias\_trans} achieves the better coverage performance, with coverage probabilities closest to $0.95$ among all methods. Because \emph{Debias\_tar}, \emph{Debias\_trans\_tar} and \emph{Debias\_dl} only debias on the target study, their coverage probabilities are somewhat a little higher/lower than $0.95$. Moreover, \emph{Debias\_trans} exhibits smaller estimation bias and shorter confidence intervals than these target-only methods, consistent with our theory that incorporating informative sources yields faster asymptotic rates. Secondly, the \emph{Debias\_trans\_pool} approach exhibits the poorest coverage performance. In principle, \emph{Debias\_trans\_pool} should enjoy reduced bias and tighter intervals due to its larger effective sample size (1000 more observations than \emph{Debias\_trans} in our simulation setting). However, residual shift undermines these potential gains: both its bias and interval lengths are only comparable to those of \emph{Debias\_trans}, while its coverage is noticeably lower. A likely explanation is the downward bias in its estimated standard errors. When \emph{Debias\_trans\_pool} includes source studies with large residual shift, the individual density $f_i^{(k)}$ for $k \in \bc_{h_2}^c$ is very small, which causes instability in estimation (often upward empirically). Since the standard error estimator has $\wf_i^{(k)}$ in the denominator, this results in underestimation of the variance. Lastly, comparing the density plots with the standard normal curve, the normalized estimates from most methods align well with the standard normal distribution. In contrast, \emph{Debias\_trans\_pool} shows a slight but obvious deviation, further indicating the abnormality induced by pooling source studies under residual shift.

\section{Application} \label{section6}

The proposed transfer learning algorithm is applied to the Genotype-Tissue Expression (GTEx) data, available at \href{https://gtexportal.org/}{https://gtexportal.org/}. This dataset contains gene expression levels from 49 tissues, involving 838 individuals, and includes 1,207,976 observations for 38,187 genes. Following the previous work \citep{li2022transfer, zhang2022transfer, zhang2024concert}, our study examines gene regulation within the central nervous system (CNS) across various tissues. The CNS-related genes are grouped under MODULE\_137, which includes 545 genes, along with an additional 1,632 genes that show significant enrichment in the same experiments as those in the module. Please refer to \href{https://www.gsea-msigdb.org/gsea/msigdb/cards/MODULE_137.html}{https://www.gsea-msigdb.org/gsea/msigdb/cards/MODULE\_137.html} for a detailed description of this module.

Our analysis focuses on the lower-quantile expression levels of the genes JAM2 and SH2D2A. Specifically, we aim to predict their expression at $\tau = 0.2$ within a given target tissue by leveraging the expression of other central nervous system (CNS) genes. Both JAM2 and SH2D2A are known to play important roles in cellular functions and disease progression, particularly in cancer and immune response. We consider 11 brain tissues as separate target studies, estimating the model for each tissue individually. For each tissue, the data are divided into five folds: one fold is used as the prediction set, while the remaining four folds are used for training. Details of the dataset, gene functions, and data analysis procedures are provided in the supplementary materials. Figure \ref{fig:realdata} reports the relative prediction errors of \emph{Pooling}, \emph{TransQR\_2step}, and our proposed method \emph{TransQR}, all compared with the baseline \emph{L1-QR}. The black horizontal line marks a ratio of 1. Consistent with the simulation results, the \emph{Pooling} approach achieves the least improvement across all tissues for both genes, indicating the presence of non-transferable tissues whose inclusion may deteriorate prediction performance. By contrast, \emph{TransQR} achieves the lowest prediction errors in almost every tissue for both genes (except C.hemisphere and N.C.A.B.ganglia for JAM2), demonstrating that our transferable set detection can effectively identify informative studies under various types of distribution shift. The two-step transfer method, \emph{TransQR\_2step}, generally lies between \emph{Pooling} and \emph{TransQR}, further underscoring that it is not sufficient to account only for parameter shift; a robust and efficient transfer framework must also address covariate shift and residual shift. Compared with mean regression results reported in \cite{li2022transfer, zhao2023residual}, our findings provide richer insights into the full distribution of gene expression, particularly at the lower quantile.

\begin{figure}[!ht] 
\begin{center}
    \includegraphics[width=0.8\linewidth]{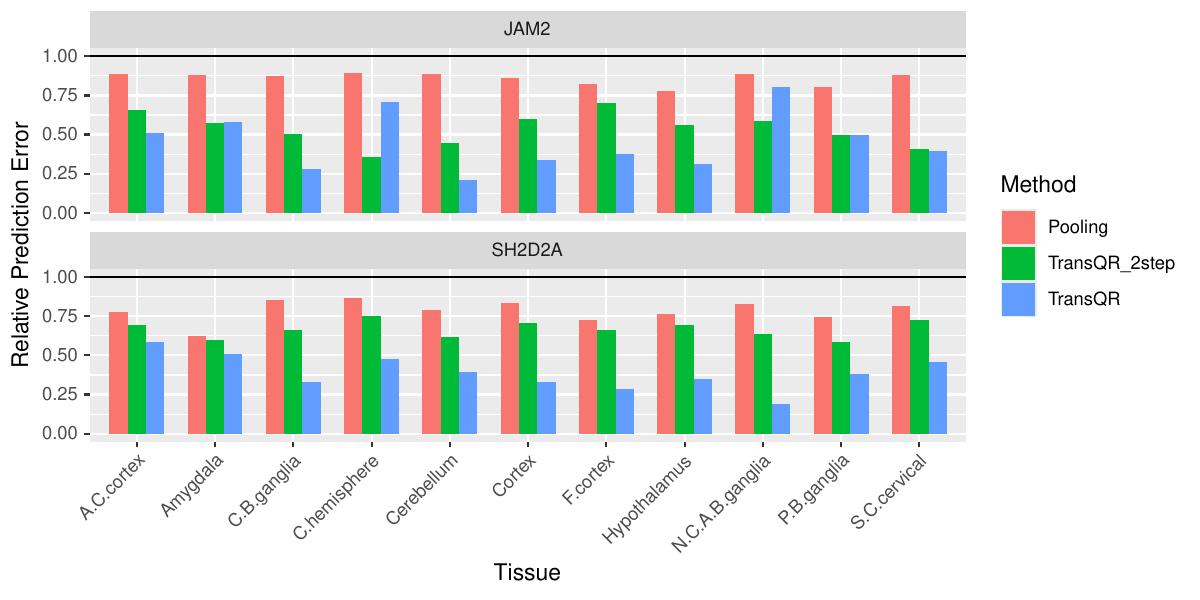}
\end{center}
\caption{Relative prediction errors for the gene expression levels of JAM2 and SH2D2A at $\tau=0.2$.}
\label{fig:realdata}
\end{figure} 

\section{Discussion}

There are several avenues to generalize our method. Firstly, we can construct source weights proportional to their respective $\tau$-th densities to achieve sharper results, since leveraging the density at objective quantiles to enhance efficiency is a common approach in the weighted quantile regression \citep{jiang2012oracle}. Also, we can integrate source information at multiple quantiles to predict the target study's objective quantile coefficient \citep{firpo2022gmm}.  

Second, we can generalize the all-in-or-all-out source selection manner to select partial informative samples from each source study, especially when the density is strongly dependent on the covariates. To be specific, define the transferable dataset (not transferable set) $\bc'_h=\{ (\bx_{i}^{(k)}, y_i^{(k)}) : 1 \le k \le K,  \| \bdelta^{(k)} \|_1 \le h_1, 1/{f^{(k)}_i} \le h_2 /{f^{(0)}_i}  \}$, then we can use the data in $\bc'_h$ to leverage more information from related source studies.
    
Third, our algorithm can be easily generalized to the communication-efficient version with multiple data storage nodes \citep{chen2020distributed}. By carefully observing our Algorithm \ref{alm:unify}, almost every step is based on the local data and the summary statistics from other studies. We just need to transmit the estimators like $\tbeta, \{\tdelta^{(k)}\}_{k=1}^{K}$ between studies, and conduct distributed high-dimensional quantile regression \citep{chen2020distributed}.  

Lastly, distributionally robust optimization \citep{gao2023distributionally} aims to obtain an optimized target distribution within a feasible distribution set. We can generalize this idea to transfer learning \citep{xiong2023distributionally}, which may help us to select source domains more effectively and learn a robust target model under distribution shift. We leave these directions for future work.

\bibliographystyle{apalike}
\setlength{\bibsep}{0pt}
\bibliography{TransQR}

\clearpage

\section{Supplementary Materials}

\startcontents[supple]
\printcontents[supple]{l}{2}{\centering \textbf{\large Outline}}

\bigskip

This appendix is organized into four subsections. Section \ref{section-sup1} offers additional explanations for the main text, including the reasonability of our detection rules under covariate shift, the algorithm of the constrained $\ell_1$-minimization problem \eqref{l1 transfer}, the practical guide for selecting hyperparameters and the discussion of similarity conditions for statistical inference. Section \ref{section-sup2} presents extensive simulation results under various settings. Section \ref{section-sup3} provides details on the real data application, including the dataset description, the biological context and the implementation procedure. 

\subsection{Additional Explanations for the Main Text} \label{section-sup1}

\subsubsection{Reasonability of Detection Rules under Covariate Shift}

In Section \ref{section2-1}, we introduced the basic idea of excluding source studies with small $\tau$-th densities under residual shift using a simplified setting of low-dimensional linear/quantile regression with identical covariance matrices. However, one may naturally question whether our proposed definition of an oracle transferable set remains valid under covariate shift—that is, when covariance matrices differ across studies. To address this concern clearly, we now present a more sophisticated example to illustrate the underlying rationale.

Compared to the low-dimensional linear setting in Section \ref{section2-1}, here we assume different covariance matrices $\Sigma_0 := \mathbb{E}[\bx_i^{(0)}(\bx_i^{(0)})^\top], \Sigma_1 := \mathbb{E}[\bx_i^{(1)}(\bx_i^{(1)})^\top]$ for the target and source study. Based on the knowledge of linear models, the target and pooling OLS estimator $\wbeta_{tar}, \wbeta_{pool}$ satisfy: \small
\begin{align*}
    \left( \wbeta_{tar} - \sbeta \right) \xrightarrow{d} \mathcal{N} \left(0, \frac{\sigma_0^2}{n_0} \Sigma_{0}^{-1} \right), \quad 
    \left(\wbeta_{pool} - \sbeta \right) \xrightarrow{d} \mathcal{N}\left(0, \frac{\bar{\Sigma}^{-1} \bar{\Sigma}_\sigma \bar{\Sigma}^{-1}}{n_0 + n_1} \right),
\end{align*}
\normalsize
where $\bar{\Sigma} = (n_0 \Sigma_0 + n_1 \Sigma_1) / (n_0 + n_1), \bar{\Sigma}_\sigma = (n_0 \sigma_0^2 \Sigma_0 + n_1 \sigma_1^2 \Sigma_1) / (n_0 + n_1)$. Suppose we are interested in the asymptotic variance of the $j$-th component (using symbol $\mathrm{Avar}_j [\cdot]$), we have that 
\small
\begin{align*}
    \mathrm{Avar}_j [ \wbeta_{tar} ] = \frac{\sigma_0^2}{n_0} (\Sigma_0^{-1})_{jj}, \quad \mathrm{Avar}_j [ \wbeta_{pool} ] = \frac{\sigma_0^2}{n_0 + n_1} (\bar{\Sigma}^{-1})_{jj} + \frac{n_1 (\sigma_1^2 - \sigma_0^2)}{(n_0 + n_1)^2} (\bar{\Sigma}^{-1} \Sigma_1 \bar{\Sigma}^{-1})_{jj}.
\end{align*}
\normalsize
If the residual variance $\sigma_1^2$ is much larger than $\sigma_0^2$, for example when $\sigma_1^2 \ge \sigma_0^2 + \frac{\sigma_0^2 (n_0 + n_1)^2 (\Sigma_0^{-1})_{jj}}{n_0 n_1 (\bar{\Sigma}^{-1} \Sigma_1 \bar{\Sigma}^{-1})_{jj}}$, the pooling estimator will be much less efficient compared to the target one. Moving to the quantile regression, we can similarly get that
\begin{align*}
    & \left( \wbeta_{tar} - \sbeta \right) \xrightarrow{d} \mathcal{N} \left(0, \frac{\tau(1-\tau)}{n_0(f^{(0)}(0))^2} \Sigma_0^{-1}\right), \quad \left( \wbeta_{pool} - \sbeta \right) \xrightarrow{d} \mathcal{N}\left(0, \frac{\tau(1-\tau)}{n_0 + n_1} \Sigma_f^{-1} \bar{\Sigma} \Sigma_f^{-1} \right), \\
    & \mathrm{Avar}_j [ \wbeta_{tar} ] = \frac{\tau(1-\tau)}{n_0(f^{(0)}(0))^2} (\Sigma_0^{-1})_{jj}, \\
    & \mathrm{Avar}_j [ \wbeta_{pool} ] = \frac{\tau(1-\tau)}{f^{(0)}(0)} \left[ \frac{(\bar{\Sigma}_f^{-1})_{jj}}{n_0 + n_1}  + \frac{n_1 (f^{(0)}(0) - f^{(1)}(0))}{(n_0 + n_1)^2} (\bar{\Sigma}_f^{-1} \Sigma_1 \bar{\Sigma}_f^{-1})_{jj} \right],
\end{align*}
where $\bar{\Sigma}_f = (n_0 f^{(0)}(0) \Sigma_0 + n_1 f^{(1)}(0) \Sigma_1) / (n_0 + n_1)$. If $f^{(1)}(0) \le f^{(0)}(0) - \frac{(n_0 + n_1)^2 (\Sigma_0^{-1})_{jj}}{f^{(0)}(0) n_0 n_1 (\bar{\Sigma}_f^{-1} \Sigma_1 \bar{\Sigma}_f^{-1})_{jj}}$, then again the pooling estimator will lose efficiency. 

From the above derivation, we also notice that the asymptotic variance is related to the composite matrices $\bar{\Sigma}$, $\bar{\Sigma}_\sigma$, and $\bar{\Sigma}_f$, which are weighted sums (rather than differences) of the covariance matrices $\Sigma_0$ and $\Sigma_1$. By utilizing the matrix Loewner order, we have that 
\begin{align*}
    & \Sigma_1 \preceq \frac{n_0 + n_1}{n_1} \bar{\Sigma}, \quad \bar{\Sigma}^{-1} \preceq \frac{n_0}{n_0 + n_1} \Sigma_0^{-1} + \frac{n_1}{n_0 + n_1} \Sigma_1^{-1}, \\
    & \Sigma_1 \preceq \frac{n_0 + n_1}{n_1 f^{(1)}(0)} \bar{\Sigma}_f, \quad \bar{\Sigma}_f^{-1} \preceq \frac{n_0 f^{(0)}(0)}{n_0 + n_1} \Sigma_0^{-1} + \frac{n_1 f^{(1)}(0)}{n_0 + n_1} \Sigma_1^{-1}.
\end{align*}
We can then get that 
\begin{align*}
    & (\bar{\Sigma}^{-1})_{jj} \le \frac{n_0}{n_0 + n_1} (\Sigma_0^{-1})_{jj} + \frac{n_1}{n_0 + n_1} (\Sigma_1^{-1})_{jj}, \\
    & (\bar{\Sigma}^{-1} \Sigma_1 \bar{\Sigma}^{-1})_{jj} \le \frac{n_0}{n_1} (\Sigma_0^{-1})_{jj} + (\Sigma_1^{-1})_{jj}, \\
    & (\bar{\Sigma}_f^{-1} \Sigma_1 \bar{\Sigma}_f^{-1})_{jj} \le \frac{n_0 f^{(0)}(0)}{n_1 f^{(1)}(0)} (\Sigma_0^{-1})_{jj} + (\Sigma_1^{-1})_{jj}.
\end{align*}
From these expressions, we can see that for both OLS and QR pooling estimators, the component-wise asymptotic variance can be well controlled by the corresponding diagonal element of matrices $\Sigma_0^{-1}$ and $\Sigma_1^{-1}$, which is not overly sensitive to the difference of covariance matrices $(\Sigma_1 - \Sigma_0)$. Instead, the primary drivers affecting estimation performance are the residual shift, i.e., $\sigma_1^2 / \sigma_0^2$ and $f^{(1)}(0) / f^{(0)}(0)$. Moreover, as the source sample size $n_1$ increases, the condition required to improve estimation efficiency becomes less restrictive. Consequently, our detection rule $n_0 \mathbb{E}[f_i^{(0)}] \le h_2 n_k \mathbb{E}[f_i^{(k)}]$ remains reasonable and robust under covariate shift scenarios.

To further validate our detection rule under strong covariate shift, we conduct additional simulations following the setting of \cite{he2024transfusion}. The covariates of each source study now follow $\mathcal{N} ( \mathbf{0}_p, \boldsymbol{\Sigma}_{\bx} + \boldsymbol{A}_{\bx}^{(k)\top} \boldsymbol{A}_{\bx}^{(k)} )$, where $\boldsymbol{A}_{\bx}^{(k)}$ is a random matrix whose entries are $0.3$ with probability $0.3$ and $0$ with probability $0.7$. The simulation results are shown in Figure \ref{simu:est_homo_normal_cov}, \ref{simu:est_homo_t_cov}, \ref{simu:est_hetero_t_cov}, \ref{simu:infer_normal_cov} and \ref{simu:infer_t_cov}. All figures convey similar conclusions to those in the main text, demonstrating the robustness of our proposed methods under strong covariate shift. Specifically, our method performs almost as well as the oracle benchmark in the parameter estimation, whereas the two-step approach exhibits noticeably higher $\ell_2$-error, underscoring its vulnerability to heterogeneous covariance structures.

\subsubsection{Algorithm of the Constrained L1-minimization Problem}

In Section \ref{section2-1}, we introduce the transfer framework \eqref{l1 transfer}, which can simultaneously tackle the parameter/residual/residual shift when utilizing the information from source studies. Here we introduce the iterative algorithm to solve the constrained $\ell_1$-minimization problem \eqref{l1 transfer} when the oracle transferable set $\bc_h$ is known, which is also applied in \cite{li2023estimation}. We can approximate \eqref{l1 transfer} with a joint $\ell_1$-penalized quantile regression ($\ell_1$-QR) form, leveraging the equivalence between Lasso and Dantzig selectors \citep{bickel2009simultaneous}:
\begin{align} \label{alternative l1 transfer}
	\left(\wbeta, \{\wdelta^{(k)}\}_{k\in \bc_h}\right)= \underset{\bbeta, \{\bdelta^{(k)}\}_{k \in \bc_h}}{\arg \min}
	\left\{\sum_{k \in \{0\} \cup \bc_h} L^{(k)} (\bbeta + \bdelta^{(k)}) + \lambda_{\bbeta} \| \bbeta \|_1 + \sum_{k \in \bc_h} \lambda_k \| \bdelta^{(k)} \|_1 \right\}.
\end{align}
This formulation is similar to the co-training step in \cite{he2024transfusion}, with replacing $\bdelta^{(k)}$ by $\bw^{(k)} - \bbeta$. As suggested by \cite{li2023estimation}, the algorithm for solving \eqref{alternative l1 transfer} proceeds naturally using iterative alternating direction methods. At a high level, our method works as follows. We first obtain an initial estimator $\wbeta_{0}$ by $\ell_1$-QR solely on the target data (or the estimator $\wbeta_{t-1}$ from the last iteration). Then for each source $k \in \bc_h$ we subtract $(\bx_{i}^{(k)})^{\top} \wbeta_{t-1}$ from the response $y_{i}^{(k)}$ and obtain the estimator $\wdelta^{(k)}_{t}$ by $\ell_1$-QR on the refined source dataset. After that, we take $\{\wdelta^{(k)}_{t}\}_{k \in \bc_h}$ into the objective function \eqref{alternative l1 transfer}, solve for the estimator $\wbeta_{t}$, and begin the next round. The iteration stops when estimators converge, where the detailed procedure is summarized in Algorithm \ref{alm:main}.

\begin{algorithm}[ht!] 
\vspace{2mm}
\caption{Parameter estimation with oracle transferable set $\bc_h$}
\label{alm:main}
\KwIn{Target data $\{((\bx_{i}^{(0)})^{\top}, y_{i}^{(0)})_{i=1}^{n_0}\}$, source data $\{((\bx_{i}^{(k)})^{\top}, y_{i}^{(k)})_{i=1}^{n_k}\}_{k \in \bc_h}$, \qquad tuning parameters $\lambda_{\bbeta}, \{ \lambda_k \}_{k \in \{0\} \cup \bc_h}$.}
Compute the initial estimate $\wbeta_{0} \leftarrow \ell_1 \text{-QR } \text{ on } [((\bx_{i}^{(0)})^{\top}, y_{i}^{(0)})_{i=1}^{n_0},\lambda_0]$. \\
\For{$t=1, 2, \dots$}{
    \begin{enumerate}[leftmargin=*]
            \item Compute $\wdelta^{(k)}_{t} \leftarrow \ell_1 \text{-QR } \text{ on } [ ((\bx_{i}^{(k)})^{\top}, y_{i}^{(k)} - (\bx_{i}^{(k)})^{\top} \wbeta_{t-1})_{i=1}^{n_k}, \lambda_k ], k \in \bc_h$.
            \item Solve the pooled $\ell_1$-QR problem
            $$
                \wbeta_t = \underset{\bbeta}{\arg \min} \left\{\sum_{k \in \{0\} \cup \bc_h} L^{(k)} (\bbeta + \wdelta^{(k)}_{t}) + \lambda_{\bbeta} \left\| \bbeta \right\|_1 \right\}.
        $$
            \item When $\wbeta_t$ converges at step $T$, record the present result and exit the iteration. 
    \end{enumerate}
}
\KwOut{$\{\wbeta, \{\wdelta^{(k)}\}_{k \in \bc_h}\}:=\{\wbeta_{T}, \{\wdelta^{(k)}_T\}_{k \in \bc_h}\}$.}
\end{algorithm}

\subsubsection{A Practical Guide for Tuning Parameters}

There are several hyperparameters in our transfer procedures, including detection thresholds $t_1, t_2$, tuning parameters $\{\lambda_k\}_{k=0}^K, \lambda_{\bbeta}$, bandwidth parameters $\{b_k\}_{k=0}^K$ in Algorithm \ref{alm:unify}, $\{\lambda_\tau^{(k)}, \lambda_{\btheta}^{(k)}\}_{k \in \{0\} \cup \wc}, \bar{\lambda}$ in Algorithm \ref{alm:inference}. To attain better performance in the simulations, here we propose several ways to determine these hyperparameters. 

\begin{itemize}[leftmargin=*]
    \item To select the $\ell_1$-QR tuning parameters, we deviate from the vanilla penalized quantile regression and instead follow the $\lambda$-construction idea in \cite{belloni2011l1, giessing2023debiased}. Specifically, the regularization term changes from $\lambda\|\bbeta\|_1$ to $\lambda \sqrt{\tau(1-\tau)} \sum_{j=1}^p \widehat{\sigma}_{k,j} |\beta_j|$, with $\widehat{\sigma}_{k,j}^2 := n_k^{-1}\sum_{i=1}^{n_k}(\bx_{ij}^{(k)})^2$ and $\lambda = c^* \cdot \Lambda_\tau^{(k)}(\alpha^* | \{\bx_{i}^{(k)}\}_{i=1}^{n_k})$,  where $\Lambda_\tau^{(k)}(\alpha^* | \{\bx_{i}^{(k)}\}_{i=1}^{n_k})$ is the $(1 - \alpha^*)$-th quantile of $\Lambda_\tau^{(k)} | \{\bx_{i}^{(k)}\}_{i=1}^{n_k}$ and
    $$
    \Lambda_\tau^{(k)} | \{\bx_{i}^{(k)}\}_{i=1}^{n_k} := \max _{1 \leq j \leq p}\left| \frac{1}{n_k}\sum_{i=1}^{n_k} \frac{\left(\tau-1\left\{U_i \leq \tau\right\}\right) \bx_{ij}^{(k)}}{\sqrt{\tau(1-\tau) \widehat{\sigma}_{k,j}^2 }}\right|,
    $$
    with $U_1, \ldots, U_{n_k}$ are i.i.d. Uniform$(0,1)$ random variables, independent of the covariates $\{\bx_{i}^{(k)}\}_{i=1}^{n_k}$. Such a pilot construction can save the computation cost a lot compared to the cross-validation. $\alpha^*$ is set as 0.05 uniformly, while we set $c^*=1$ for $\lambda_0, \lambda_{\sbeta}$, $c^*=1.5$ for $\{\lambda_k\}_{k=1}^K$, $c^* = 2\sqrt{\tau(1-\tau)}$ for $\{\lambda_{\tau}^{(k)}\}_{k \in \{0\} \cup \wc}$. This procedure can be executed through R functions \texttt{LassoLambdaHat()} in the \texttt{quantreg} package. 
    \item For the bandwidth parameters $\{b_k\}_{k=0}^K$, we follow the Powell bandwidth construction \cite{dai2021inference, huang2023estimation} with estimators $\tbeta, \{\tdelta^{(k)}\}_{k=1}^K$ in Algorithm \ref{alm:unify}. Specifically, 
    \begin{align*}
        & b_k = \left( \Phi^{-1} (\tau + \widetilde{b}_k) - \Phi^{-1} (\tau - \widetilde{b}_k) \right) \min \left\{ \sqrt{\widehat{var}\left[ \widehat{\boldsymbol{e}}^{(k)}_i \right]}, \frac{\widehat{Q}_{0.75} \left[ \widehat{\boldsymbol{e}}^{(k)}_i \right] - \widehat{Q}_{0.25} \left[ \widehat{\boldsymbol{e}}^{(k)}_i \right]}{1.34} \right\}, \\
        & \widetilde{b}_k = n_k^{-1/3} \left\{ \Phi^{-1} (0.975) \right\}^{2/3} \left( \frac{1.5 \phi^2(\Phi^{-1}(\tau))}{2 \Phi^{-2}(\tau) + 1} \right)^{1/3},
    \end{align*}
    where $\widehat{\boldsymbol{e}}^{(k)}_i = y_i^{(k)} - (\bx_i^{(k)})^\top (\tbeta + \tdelta^{(k)})$, $\widehat{var} [\widehat{\boldsymbol{e}}^{(k)}_i]$ and $\widehat{Q}_{\tau} [\widehat{\boldsymbol{e}}^{(k)}_i]$ denote the sample variance and the $\tau$-th sample quantile of $\{\widehat{\boldsymbol{e}}^{(k)}_i\}_{i=1}^{n_k}$.
    \item For the threshold parameter $t_1, t_2$ in Algorithm \ref{alm:unify}, one practical strategy is to conduct the cross-validation procedure \citep{tian2022transfer, zhang2022transfer} for a target study and several candidate source studies, where we evaluate a grid of threshold values and choose the thresholds that yield the best prediction performance (least quantile loss on the validation set). The chosen thresholds can then be applied across all related transfer learning tasks. For moderately complex scenarios, this procedure is not computationally expensive and provides sufficiently accurate threshold choices. To further examine the impact of threshold selection, we conducted sensitivity analyses by varying the values of $t_1$ and $t_2$ in the simulations, as shown in Figures \ref{simu:est_homo_normal_thres1} and \ref{simu:est_homo_normal_thres2}. The results indicate that as long as the thresholds lie within a reasonable interval, our method’s performance remains close to the oracle benchmark, consistent with the detection consistency established in Theorem \ref{thm:detection}. For all simulations in this paper, we set $t_1 = 5$ and $t_2 = 0.3$. 
    \item For the post-selected threshold $\bar{\lambda}$, we directly let it be 0.01 for simplicity.
    \item For the parameters $\lambda_{\btheta}^{(k)}$ in the projection, we use the common pivotal choice $c_{\btheta} \sqrt{n_k \log p}$ in the high-dimensional settings. For different simulations (with settings described in Section \ref{section-sup2}), the constant $c_{\btheta}$ is set as $0.1$ for the target study and source studies under normal distributions without strong covariate shift, $0.05$ for source studies under normal distributions with strong covariate shift, $0.08$ for the target study and source studies under $t$ distributions without strong covariate shift, $0.03$ for source studies under $t$ distributions with strong covariate shift. 
\end{itemize}

\subsubsection{Quality Control for Statistical Inference}

In Section \ref{section4}, we introduce the Neyman orthogonal debiased approach to incorporate informative source studies for statistical inference. when the source studies in the oracle transferable set $\bc_h$ satisfy the requirement for Theorem \ref{thm:inference}, we can get the asymptotic result of Algorithm \ref{alm:inference} by combining Theorem \ref{thm:detection} and \ref{thm:inference}.

\begin{corollary}
    Assume all conditions required by Theorem \ref{thm:convergence}, \ref{thm:detection} and \ref{thm:inference} hold for the oracle transferable set $\bc_h$. For the estimator $\walpha$ obtained from Algorithm \ref{alm:inference}, we have
    \small
	\begin{align*}
		\sigma_{\bc_h}^{-1} \sqrt{n_{\bc_h}} (\walpha - \beta_1^*) = \mathbb{U}_{\bc_h}(\tau) + o_p (1), \qquad \mathbb{U}_{\bc_h}(\tau)  \leadsto \mathcal{N}(0,1),
	\end{align*}
    \normalsize
	where $\sigma_{\bc_h}, \mathbb{U}_{\bc_h}$ follow the definitions of $\sigma_{\bc}, \mathbb{U}_{\bc}$ in Theorem \ref{thm:inference} with $\bc = \bc_h$. Moreover, the estimator $\widehat{\sigma}_{\wc}^{2} = \tau (1 - \tau) n_{\wc} [\sum_{k \in \{0\} \cup \wc} \sum_{i=1}^{n_k} (\wv_{i}^{(k)})^2 ]^{-1}$ is consistent. 
\end{corollary}

Although this corollary seems to provide the theoretical guarantee for the post-detection inference procedure, leveraging source data still needs parameter shift small enough, especially for the approximation error $r_i^{(k)} = x_{i,1}^{(k)} \delta^{(k)}_1, k \in \bc_h$. In fact, none of the existing debiased approaches can be applied to the approximate error term, because each $\delta^{(k)}_1$ shares the same covariate $x_{i,1}^{(k)}$ with the target parameter $\beta^{*}_{1}$. As noted by \cite{belloni2019valid}, the orthogonal approach remains theoretically valid as long as $\mathbb{E}[(r_i^{(k)})^2] = (\mathbb{E}[x_{i1}^{(k)}] \delta_1^{(k)})^2$ stays below a certain threshold. In the transfer setting, introducing more external data increases the requirement for the contrast $\delta_1^{(k)}$, where Condition \ref{cond:moment conditions} showed that a sufficient theoretical bound is $o(n_{\wc}^{-1/2})$. This requirement is understandable since we need smaller parameter contrast for better normality results. However, constructing consistent selection procedures or hypothesis tests for this condition is nearly impossible, especially when dynamically selecting $\wc$. Although there is no theoretical guarantee, we propose a practical quality control procedure to validate the transfer debiased estimator. We can use prior knowledge to exclude certain source studies with non-negligible $\widetilde{\delta}_1^{(k)}$ and perform a bootstrap test to check the normality of the debiased estimator. If the normality hypothesis is rejected, we revert to debiasing only the target study, yielding a result similar to previous approaches. The full procedure is outlined in Algorithm \ref{alm:quality control}.

\begin{algorithm}[ht] 
\caption{Quality control for Algorithm \ref{alm:inference}}
\label{alm:quality control}
\KwIn{Same as Algorithm \ref{alm:inference}.}
\begin{enumerate}[label=\Alph*)]
    \item Pre-control: for each source study $k \in \wc$, if there has prior knowledge that the parameter contrast $\delta_1^{(k)}$ is non-negligible, or $|\widetilde{\delta}_1^{(k)}|$ is larger than a prespecified constant, then we remove this study. Denote the new detected set by $\widetilde{\bc}$.
    \item Post-control: normality test with quantile bootstrap.
    \begin{enumerate}[label=(\arabic*), leftmargin=*]
        \item Conduct the quantile pair bootstrap \citep{hahn1995bootstrapping} $R$ times on the data $\{((\bx_{i}^{(k)})^{\top}, y_{i}^{(k)})_{i=1}^{n_k}\}_{k \in \{0\} \cup \widetilde{\bc}}$ to obtain the debiased estimates $\{\walpha_r\}_{r=1}^R$.
        \item Check the normality of $\{\walpha_r\}_{r=1}^R$ by Shapiro-Wilk test \citep{shapiro1965analysis}. \\
        - If rejecting null hypothesis ($p \text{-value} < 0.05$), we choose to only debias on the target study by $\arg \min_ {\alpha \in \mathcal{A}_{0}} | \sum_{i=1}^{n_0} \wpsi_i^{(0)}(\alpha) |$, where the search region $\mathcal{A}_{0} = \{ \alpha \in \mathbb{R}:| \alpha - \widehat{\beta}_1| \le 10 [ n_0^{-1} \sum_{i=1}^{n_0} (x^{(0)}_{i,1})^2  ]^{-1/2} / \log n_0 \}$, which is exactly the approach in \cite{belloni2019valid}. \\
        - If not rejecting null hypothesis, conduct Algorithm \ref{alm:inference} with $\widetilde{\bc}$.
    \end{enumerate}
\end{enumerate}
\KwOut{Debiased estimate $\walpha$.}
\end{algorithm}

In Section \ref{section5.2}, we have shown that transfer debiased estimators can enjoy sharper inference results if the contrast level $\delta_1^{(k)}$ is small. Now with other settings unchanged, we let $|\delta_1^{(k)}|=0.1$ and still compare the methods conducted in \ref{section5.2}, where the results are shown in Table \ref{tab:infer_normal_h} and Figure \ref{simu:infer_normal_h}. 

From Table \ref{tab:infer_normal_h} and Figure \ref{simu:infer_normal_h}, when $|\delta_1^{(k)}|$ is large, unfortunately our \emph{Debias\_trans} performs much worse than the target debiasing methods, which is in our expectation since Condition \ref{cond:moment conditions} is no longer satisfied. The density plots in Figure \ref{simu:infer_normal_h} show that the transfer debiased estimate is no longer normal, and the Shapiro-Wilk test rejects the normality hypothesis. In a similar way, the bootstrap approach in Algorithm \ref{alm:quality control} can also indicate us that debiasing on source studies will generate unreliable estimates. Therefore, we recommend only debiasing on the target study, namely the \emph{Debias\_trans\_tar} approach, which can keep the coverage proportions with more stable interval lengths than \emph{Debias\_dl}.

\clearpage

\subsection{Additional Simulation Results} \label{section-sup2}

We first introduce the simulation details for the motivating example in Section \ref{section1}. The setting is similar to the simulation in \cite{zhang2022transfer}. We set $p=500, s=10, n_0=200, \sbeta=(\mathbf{0.5}_s, \mathbf{0}_{p-s})^\top, \bx_i^{(0)} \sim \mathcal{N}(\mathbf{0}_p, \boldsymbol{\Sigma}_{\bx})$ with $\boldsymbol{\Sigma}_{\bx}=(0.7^{|i-j|})_{1 \leq i, j \leq p}$. We consider only one source study with sample size $n_1$ varying from 100 to 500, and we assume no parameter or covariate shift, i.e., $\bdelta^{(1)} = 0, \bx_i^{(1)} \sim \mathcal{N}(\mathbf{0}_p, \boldsymbol{\Sigma}_{\bx})$.  The key change lies in the residual distributions, where $\epsilon_i^{(0)}$ is standard normal, but $\epsilon_i^{(1)}$ follows one of the four types of distributions: 1) standard normal: $\mathcal{N}(0,1)$; 2) Cauchy distribution: $\mathcal{C}(0,3)$; 3) mixed Gaussian: $z\mathcal{N}(-3,0.5) + (1-z)\mathcal{N}(3,0.5)$, $z \sim \mathcal{B}ernoulli(\tau)$; 4) noisy normal: $\mathcal{N}(0,5^2)$. It is worth noting that in this setting, the source study is included in the oracle transferable set by the previous definition \citep{li2022transfer, tian2022transfer}. We compare four methods for estimating the target parameter $\sbeta$ at the $0.2$-th quantile, which are: 1) \emph{L1-QR}: $\ell_1$-penalized quantile regression \citep{belloni2011l1} on the target data; 2) \emph{Pooling}: $\ell_1$-penalized quantile regression on the entire dataset; 3) \emph{TransQR\_2step}: two-step transfer for quantile regression \citep{huang2023estimation}; 4) \emph{TransSQR\_2step}: two-step transfer for smoothed quantile regression \citep{zhang2022transfer}. We repeat these methods 50 times and report the average of $\ell_2$-estimation errors in Figure \ref{fig:counter example}.

We conduct extensive experiments for parameter estimation and statistical inference under several circumstances. In particular, we consider the cases with strong covariate shift and $t$-distribution. Strong covariate shift means the covariates of each source study now follow $\mathcal{N} ( \mathbf{0}_p, \boldsymbol{\Sigma}_{\bx} + \boldsymbol{A}_{\bx}^{(k)\top} \boldsymbol{A}_{\bx}^{(k)} )$, where $\boldsymbol{A}_{\bx}^{(k)}$ is a random matrix whose entries are $0.3$ with probability $0.3$ and $0$ with probability $0.7$. $t$-distribution means we set $\epsilon_i^{(k)}, k \in \{0\} \cup \bc_{h_2}$ follow the shifted $t(3)$ distribution. Except the changes we mentioned, all other simulation settings are kept the same as those used in Figure \ref{simu:est_homo_normal} and \ref{simu:infer_normal} of the main manuscript.

\noindent \textbf{Parameter estimation}
    
\begin{itemize}
    \item Figure \ref{simu:est_hetero_normal}: heteroscedastic model, $K=5$, $h_1=5$.
    \item Figure \ref{simu:est_homo_normal_k=20}: homoscedastic model, $K=20$, $h_1=5$.
    \item Figure \ref{simu:est_hetero_normal_k=20}: heteroscedastic model, $K=20$, $h_1=5$.
    \item Figure \ref{simu:est_homo_normal_h=15}: homoscedastic model, $K=5$, $h_1=15$.
    \item Figure \ref{simu:est_hetero_normal_h=15}: heteroscedastic model, $K=5$, $h_1=15$.
    \item Figure \ref{simu:est_homo_normal_cov}: homoscedastic model, $K=5$, $h_1=5$, strong covariate shift.
    \item Figure \ref{simu:est_homo_t_cov}: homoscedastic model, $K=5$, $h_1=5$, strong covariate shift, $t$-distribution.
    \item Figure \ref{simu:est_hetero_t_cov}: heteroscedastic model, $K=5$, $h_1=5$, strong covariate shift, $t$-distribution.
    \item Figure \ref{simu:est_homo_normal_thres1}: homoscedastic model with varying $t_1$ values for \emph{TransQR}. We set $\bc_{h_1}=\{1,2\}, \bc_{h_2}=\{1,2,3,4,5\}$. We use five-fold cross-validation on the target study to compute the prediction error (quantile loss). The results are averaged over $20$ repetitions. 
    \item Figure \ref{simu:est_homo_normal_thres2}: homoscedastic model with varying $t_2$ values for \emph{TransQR}. We set $\bc_{h_1}=\{1,2,3,4,5\}, \bc_{h_2}=\{1,2\}$. We use five-fold cross-validation on the target study to compute the prediction error (quantile loss). The results are averaged over $20$ repetitions. 
\end{itemize}

\noindent \textbf{Statistical inference}

\begin{itemize}
    \item Figure \ref{simu:infer_normal_h} and Table \ref{tab:infer_normal_h}: $|\delta_1^{(k)}| = 0.1$.
    \item Figure \ref{simu:infer_normal_cov} and Table \ref{tab:infer_normal_cov}: strong covariate shift.
    \item Figure \ref{simu:infer_t_cov} and Table \ref{tab:infer_t_cov}: strong covariate shift.
\end{itemize}

\begin{figure}[p] 
    \begin{center}
        \includegraphics[width=0.9\linewidth]{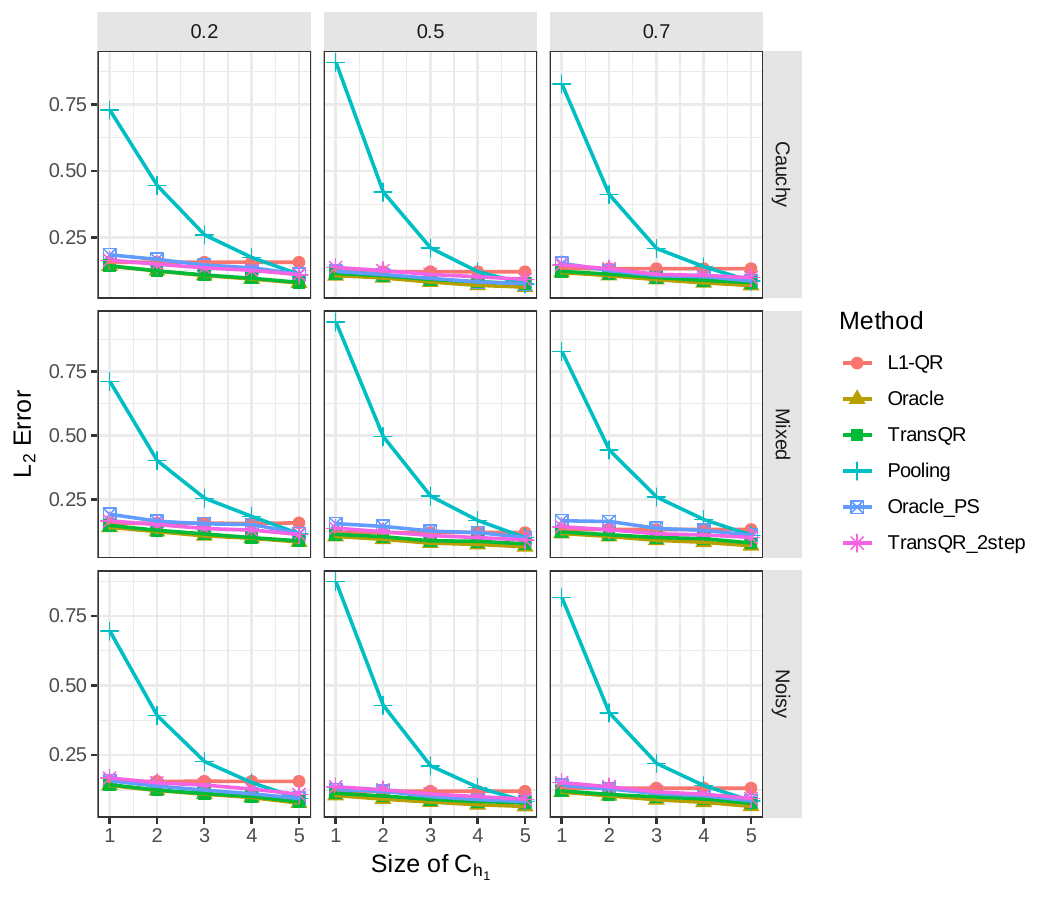}
    \end{center}
    \caption{Average $\ell_2$-error of various methods under heteroscedastic model with $K=5$.}
    \label{simu:est_hetero_normal}
\end{figure}

\begin{figure}[p] 
    \begin{center}
        \includegraphics[width=0.9\linewidth]{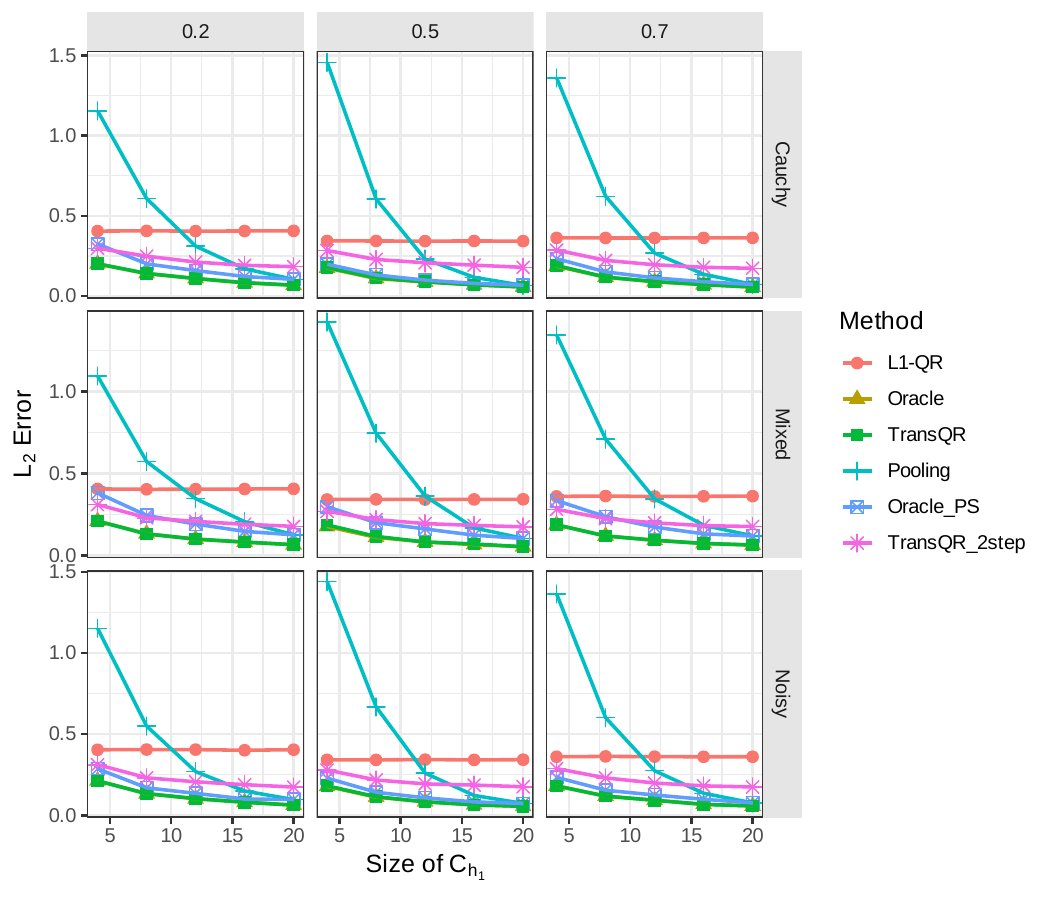}
    \end{center}
    \caption{Average $\ell_2$-error of various methods under homoscedastic model with $K=20$.}
    \label{simu:est_homo_normal_k=20}
\end{figure}

\begin{figure}[p] 
	\begin{center}
		\includegraphics[width=0.9\linewidth]{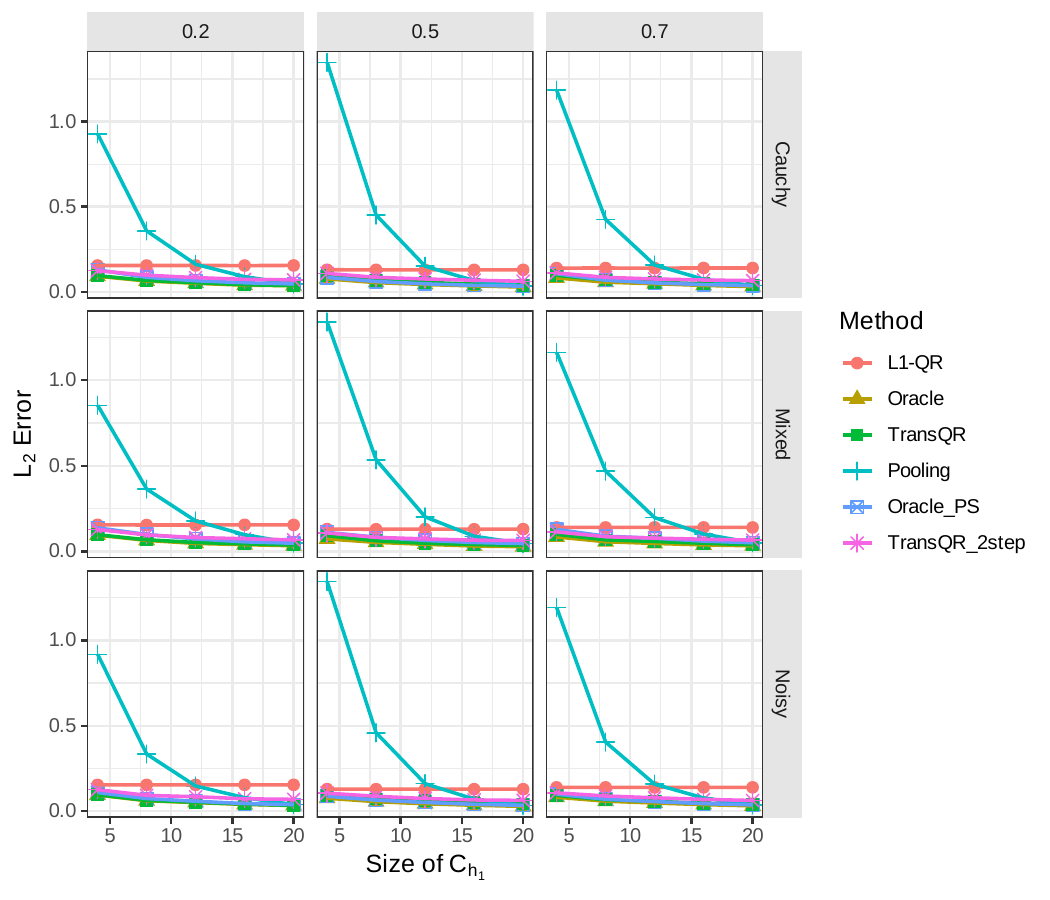}
	\end{center}
	\caption{Average $\ell_2$-error of various methods under heteroscedastic model with $K=20$.}
    \label{simu:est_hetero_normal_k=20}
\end{figure}

\begin{figure}[p] 
	\begin{center}
		\includegraphics[width=0.9\linewidth]{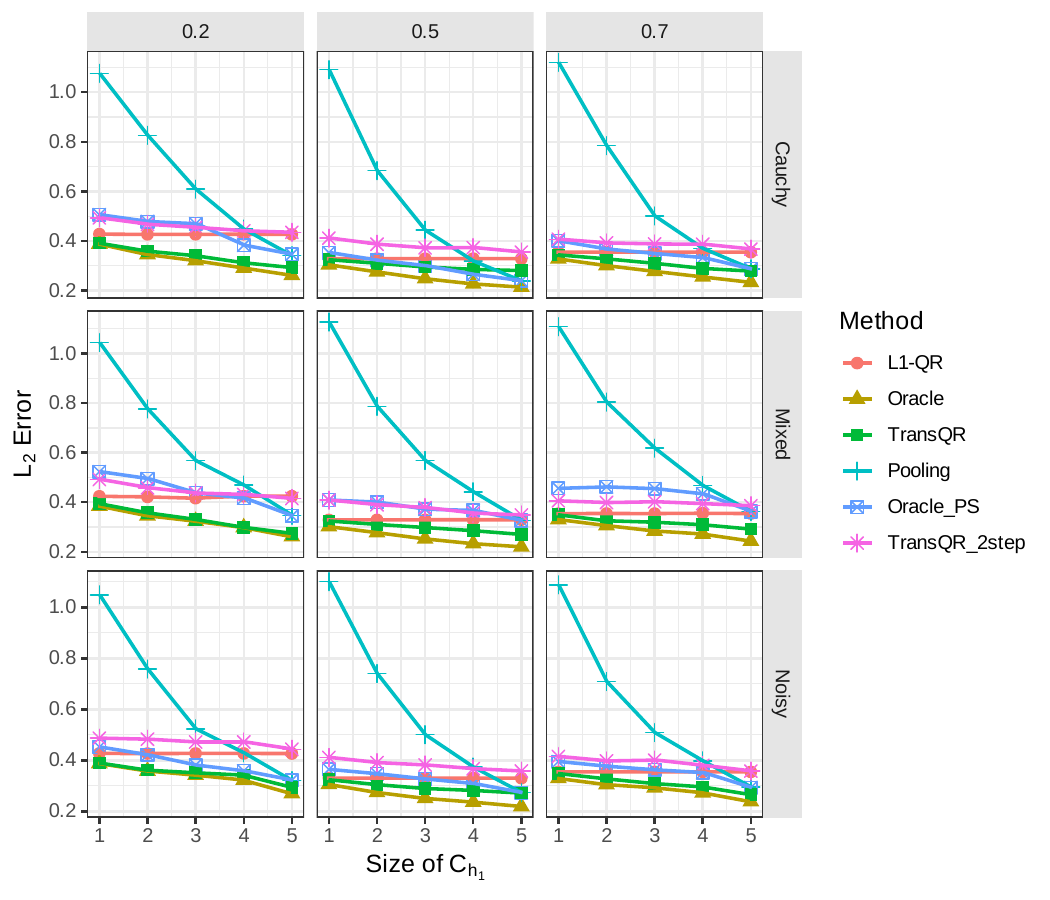}
	\end{center}
	\caption{Average $\ell_2$-error of various methods under homoscedastic model with $h_1=15$.}
    \label{simu:est_homo_normal_h=15}
\end{figure}

\begin{figure}[p] 
	\begin{center}
		\includegraphics[width=0.9\linewidth]{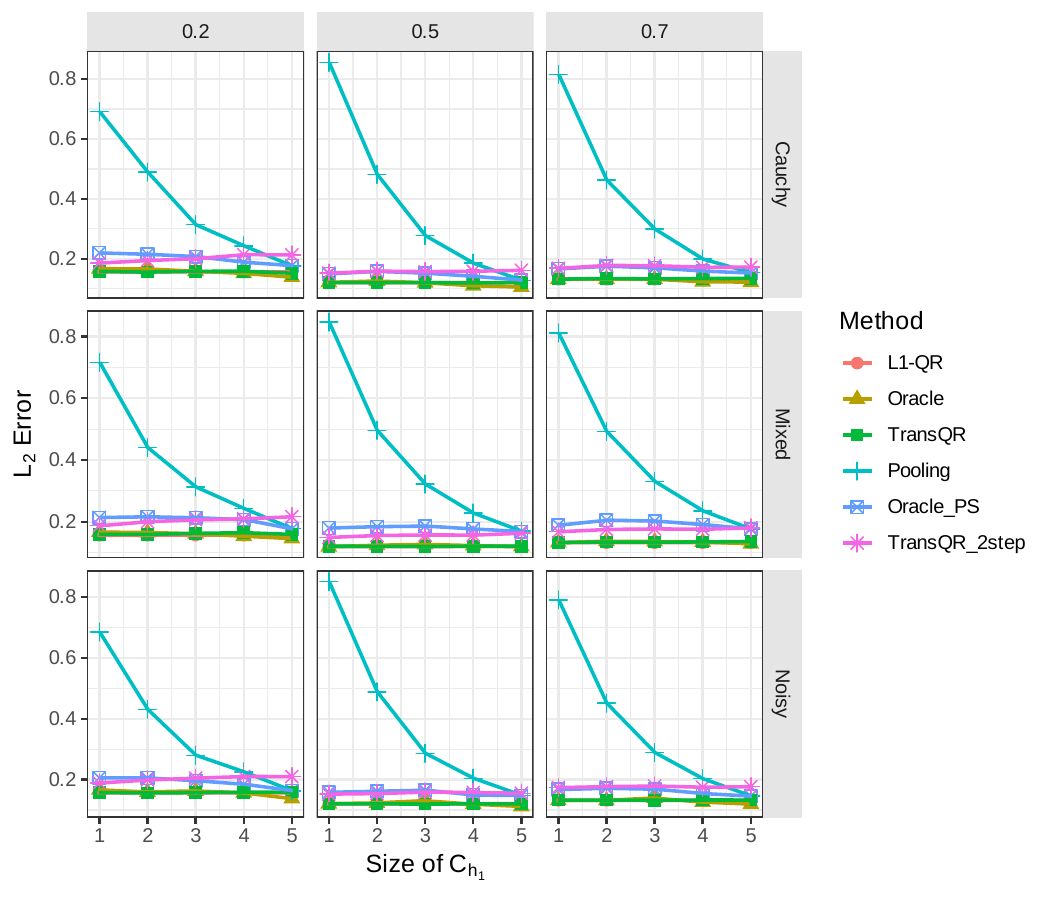}
	\end{center}
	\caption{Average $\ell_2$-error of various methods under heteroscedastic model with $h_1=15$.}
    \label{simu:est_hetero_normal_h=15}
\end{figure}

\begin{figure}[p] 
    \begin{center}
        \includegraphics[width=0.9\linewidth]{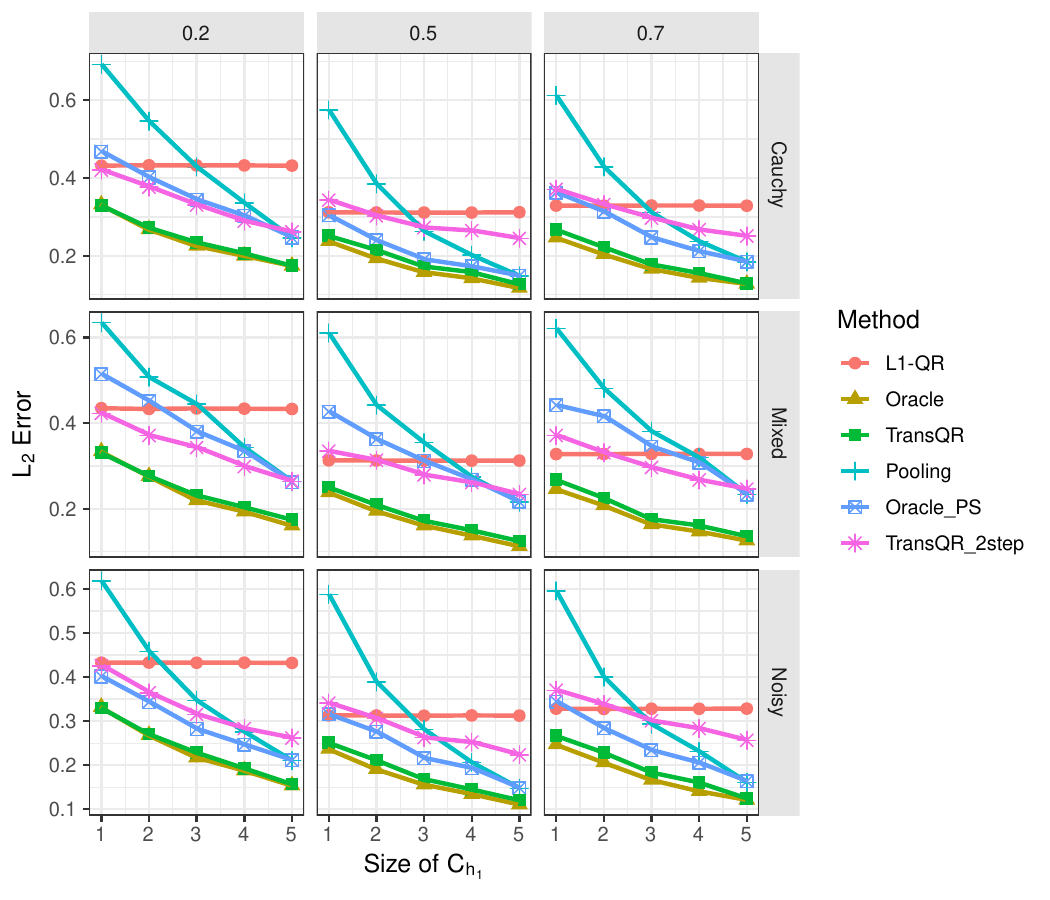}
    \end{center}
    \caption{Average $\ell_2$-error of various methods under homoscedastic model with strong covariate shift.}
    \label{simu:est_homo_normal_cov}
\end{figure}

\begin{figure}[p] 
    \begin{center}
        \includegraphics[width=0.9\linewidth]{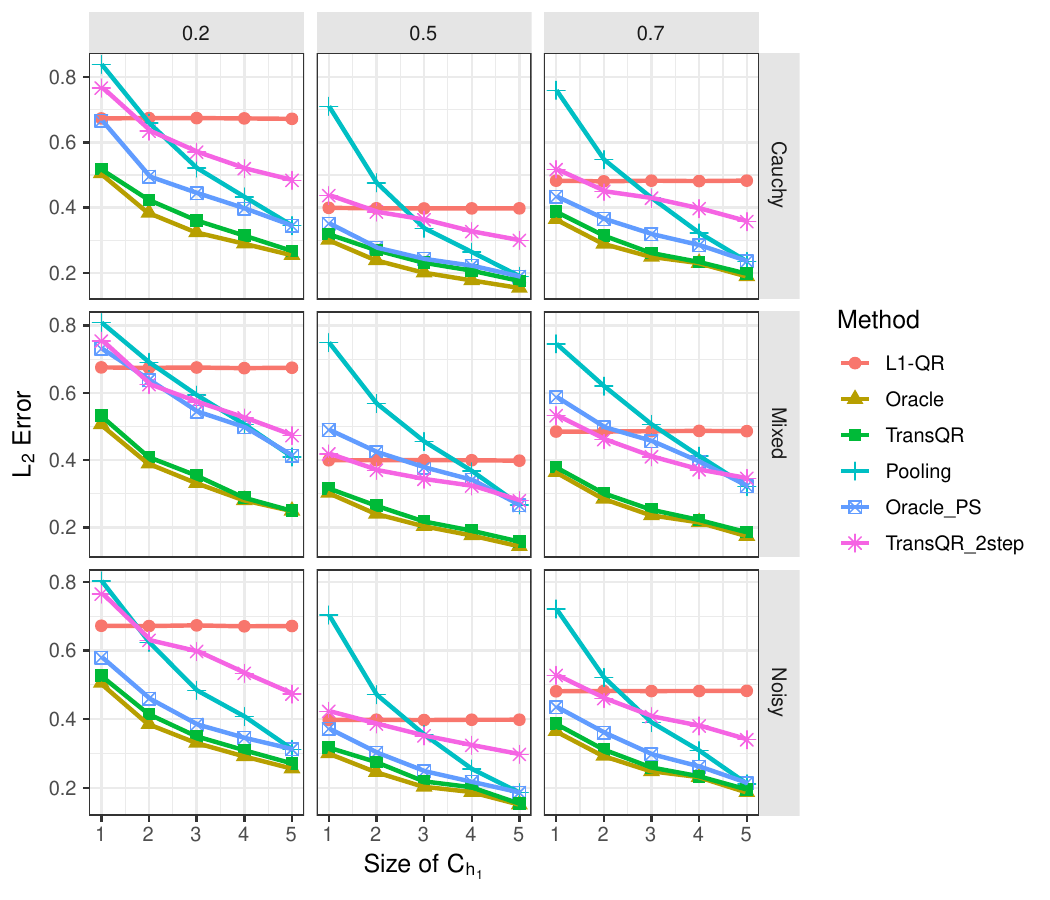}
    \end{center}
    \caption{Average $\ell_2$-error of various methods under homoscedastic model and $t$-distribution with strong covariate shift.}
    \label{simu:est_homo_t_cov}
\end{figure}

\begin{figure}[p] 
    \begin{center}
        \includegraphics[width=0.9\linewidth]{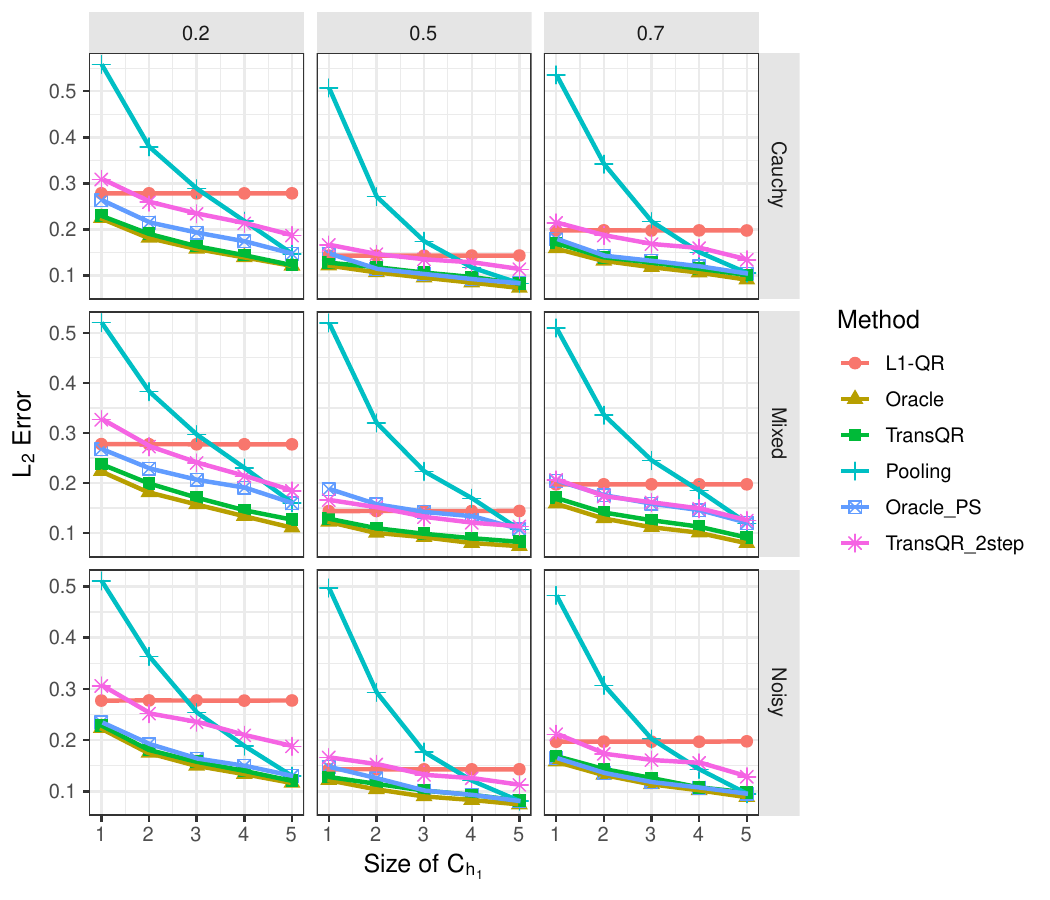}
    \end{center}
    \caption{Average $\ell_2$-error of various methods under heteroscedastic model and $t$-distribution with strong covariate shift.}
    \label{simu:est_hetero_t_cov}
\end{figure}

\begin{figure}[ht]
    \begin{center}
        \includegraphics[width=0.9\linewidth]{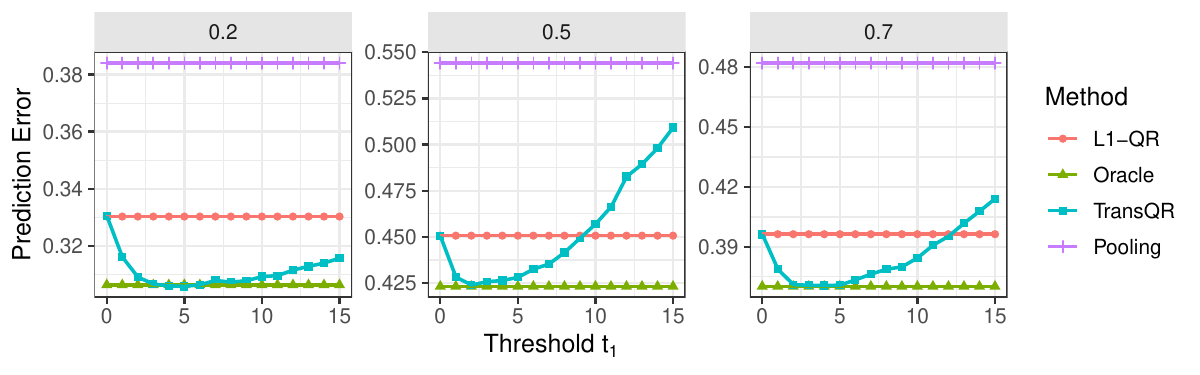}
    \end{center}
    \caption{Average prediction errors under different values of the threshold $t_1$ in \emph{TransQR}.}
    \label{simu:est_homo_normal_thres1}
\end{figure}

\begin{figure}[ht]
    \begin{center}
        \includegraphics[width=0.9\linewidth]{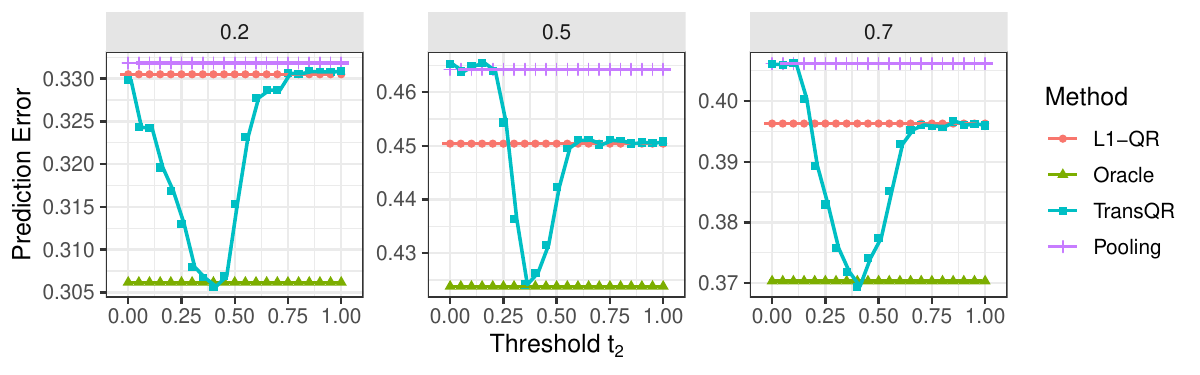}
    \end{center}
    \caption{Average prediction errors under different values of the threshold $t_2$ in \emph{TransQR}.}
    \label{simu:est_homo_normal_thres2}
\end{figure}

\begin{table}[ht]
\footnotesize
\centering
\renewcommand{\arraystretch}{0.8}
\setlength{\tabcolsep}{2.5pt}
\captionsetup{skip=3pt}
\caption{Coverage probability of $95\%$ confidence intervals (CP), average absolute bias (Bias), standard error (SE), and average estimated standard error (ESE) of various methods with $|\delta^{(k)}_1|=0.1$.}
\label{tab:infer_normal_h}
\begin{tabular}{@{}l *{12}{c} @{}}
    \toprule
    \multirow{2}{*}{\normalsize{Method}} & \multicolumn{4}{c}{$\tau=0.2$} & \multicolumn{4}{c}{$\tau=0.5$} & \multicolumn{4}{c}{$\tau=0.7$} \\
    \cmidrule(lr){2-5} \cmidrule(lr){6-9} \cmidrule(l){10-13}
    & CP & Bias & SE & ESE & CP & Bias & SE & ESE & CP & Bias & SE & ESE \\
    \midrule
    \emph{Debias\_tar} & 0.965 & 0.101 & 0.123 & 0.138 & 0.954 & 0.0857 & 0.108 & 0.110 & 0.953 & 0.0880 & 0.110 & 0.117 \\
    \emph{Debias\_trans\_pool} & 0.809 & 0.0892 & 0.107 & 0.0738 & 0.788 & 0.0792 & 0.0955 & 0.0627 & 0.781 & 0.0823 & 0.0998 & 0.0657 \\
    \emph{Debias\_trans\_tar} & 0.971 & 0.0958 & 0.125 & 0.138 & 0.950 & 0.0844 & 0.105 & 0.110 & 0.954 & 0.0875 & 0.110 & 0.117 \\
    \emph{Debias\_trans} & 0.866 & 0.0861 & 0.104 & 0.0807 & 0.836 & 0.0772 & 0.0937 & 0.0677 & 0.853 & 0.0791 & 0.0953 & 0.0711 \\
    \emph{Debias\_dl} & 0.928 & 0.0876 & 0.109 & 0.104 & 0.935 & 0.0714 & 0.0893 & 0.0854 & 0.936 & 0.0782 & 0.0964 & 0.0910 \\
    \bottomrule
\end{tabular}
\end{table}

\begin{figure}[p] 
\begin{center}
    \includegraphics[width=0.9\linewidth]{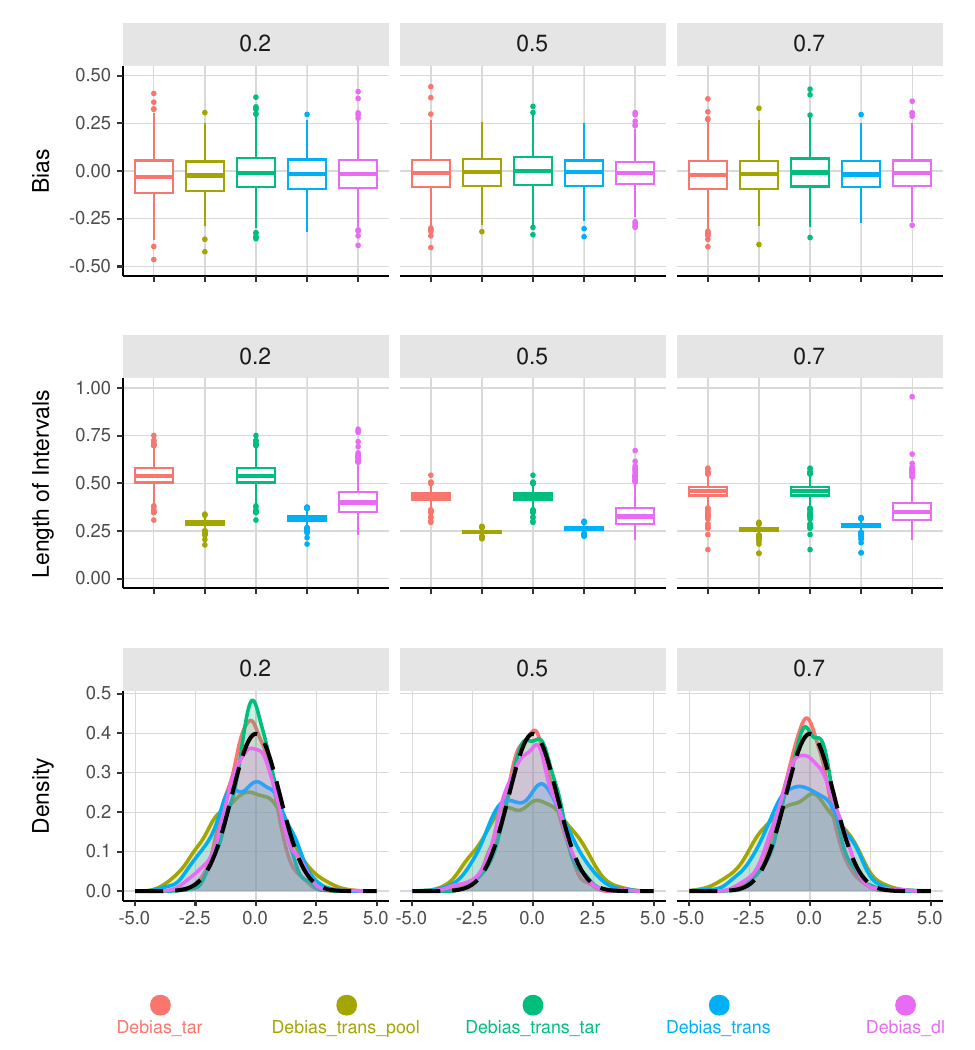}
\end{center}
\caption{Inference results of various methods with $|\delta^{(k)}_1|=0.1$, including boxplots of bias, boxplots of confidence interval lengths, and density plots of normalized estimates compared to $\mathcal{N}(0,1)$ with black dotted line.}
\label{simu:infer_normal_h}
\end{figure} 

\begin{table}[ht]
\footnotesize
\centering
\renewcommand{\arraystretch}{0.8}
\setlength{\tabcolsep}{2.5pt}
\captionsetup{skip=3pt}
\caption{Coverage probability of $95\%$ confidence intervals (CP), average absolute bias (Bias), standard error (SE), and average estimated standard error (ESE) of various methods with strong covariate shift.}
\label{tab:infer_normal_cov}
\begin{tabular}{@{}l *{12}{c} @{}}
    \toprule
    \multirow{2}{*}{\normalsize{Method}} & \multicolumn{4}{c}{$\tau=0.2$} & \multicolumn{4}{c}{$\tau=0.5$} & \multicolumn{4}{c}{$\tau=0.7$} \\
    \cmidrule(lr){2-5} \cmidrule(lr){6-9} \cmidrule(l){10-13}
    & CP & Bias & SE & ESE & CP & Bias & SE & ESE & CP & Bias & SE & ESE \\
    \midrule
    \emph{Debias\_tar} & 0.963 & 0.103 & 0.137 & 0.138 & 0.941 & 0.0853 & 0.108 & 0.110 & 0.954 & 0.0934 & 0.114 & 0.118 \\
    \emph{Debias\_trans\_pool} & 0.927 & 0.0620 & 0.0778 & 0.0696 & 0.944 & 0.0556 & 0.0687 & 0.0656 & 0.918 & 0.0594 & 0.0774 & 0.0655 \\
    \emph{Debias\_trans\_tar} & 0.979 & 0.0964 & 0.134 & 0.138 & 0.952 & 0.0826 & 0.105 & 0.110 & 0.963 & 0.0879 & 0.110 & 0.118 \\
    \emph{Debias\_trans} & 0.961 & 0.0592 & 0.0755 & 0.0769 & 0.963 & 0.0515 & 0.0652 & 0.0710 & 0.948 & 0.0565 & 0.0784 & 0.0713 \\
    \emph{Debias\_dl} & 0.966 & 0.0784 & 0.0986 & 0.110 & 0.945 & 0.0677 & 0.0868 & 0.0884 & 0.951 & 0.0724 & 0.0914 & 0.0962 \\
    \bottomrule
\end{tabular}
\end{table}

\begin{figure}[p] 
\begin{center}
    \includegraphics[width=0.9\linewidth]{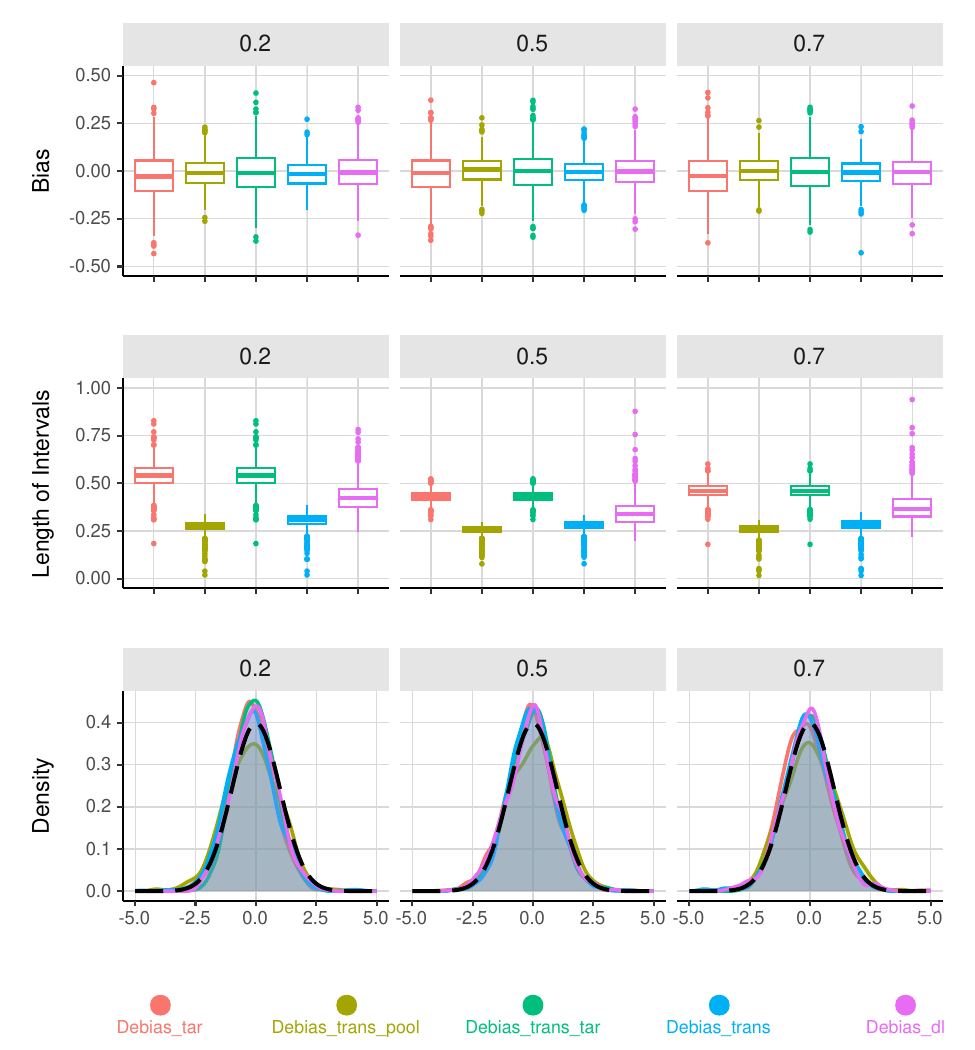}
\end{center}
\caption{Inference results of various methods with strong covariate shift, including boxplots of bias, boxplots of confidence interval lengths, and density plots of normalized estimates compared to $\mathcal{N}(0,1)$ with black dotted line.}
\label{simu:infer_normal_cov}
\end{figure} 

\begin{table}[ht]
\footnotesize
\centering
\renewcommand{\arraystretch}{0.8}
\setlength{\tabcolsep}{2.5pt}
\captionsetup{skip=3pt}
\caption{Coverage probability of $95\%$ confidence intervals (CP), average absolute bias (Bias), standard error (SE), and average estimated standard error (ESE) of various methods under $t$-distribution with strong covariate shift.}
\label{tab:infer_t_cov}
\begin{tabular}{@{}l *{12}{c} @{}}
    \toprule
    \multirow{2}{*}{\normalsize{Method}} & \multicolumn{4}{c}{$\tau=0.2$} & \multicolumn{4}{c}{$\tau=0.5$} & \multicolumn{4}{c}{$\tau=0.7$} \\
    \cmidrule(lr){2-5} \cmidrule(lr){6-9} \cmidrule(l){10-13}
    & CP & Bias & SE & ESE & CP & Bias & SE & ESE & CP & Bias & SE & ESE \\
    \midrule
    \emph{Debias\_tar} & 0.929 & 0.140 & 0.166 & 0.165 & 0.957 & 0.100 & 0.124 & 0.131 & 0.926 & 0.116 & 0.162 & 0.137 \\
    \emph{Debias\_trans\_pool} & 0.905 & 0.0795 & 0.107 & 0.0833 & 0.933 & 0.0624 & 0.0802 & 0.0801 & 0.906 & 0.0689 & 0.0893 & 0.0770 \\
    \emph{Debias\_trans\_tar} & 0.945 & 0.128 & 0.161 & 0.165 & 0.966 & 0.0943 & 0.119 & 0.131 & 0.946 & 0.107 & 0.135 & 0.137 \\
    \emph{Debias\_trans} & 0.944 & 0.0753 & 0.105 & 0.0935 & 0.978 & 0.0588 & 0.0745 & 0.0874 & 0.948 & 0.0648 & 0.0836 & 0.0844 \\
    \emph{Debias\_dl} & 0.940 & 0.110 & 0.135 & 0.138 & 0.956 & 0.0787 & 0.0996 & 0.101 & 0.952 & 0.0857 & 0.108 & 0.113 \\
    \bottomrule
\end{tabular}
\end{table}

\begin{figure}[p] 
\begin{center}
    \includegraphics[width=0.9\linewidth]{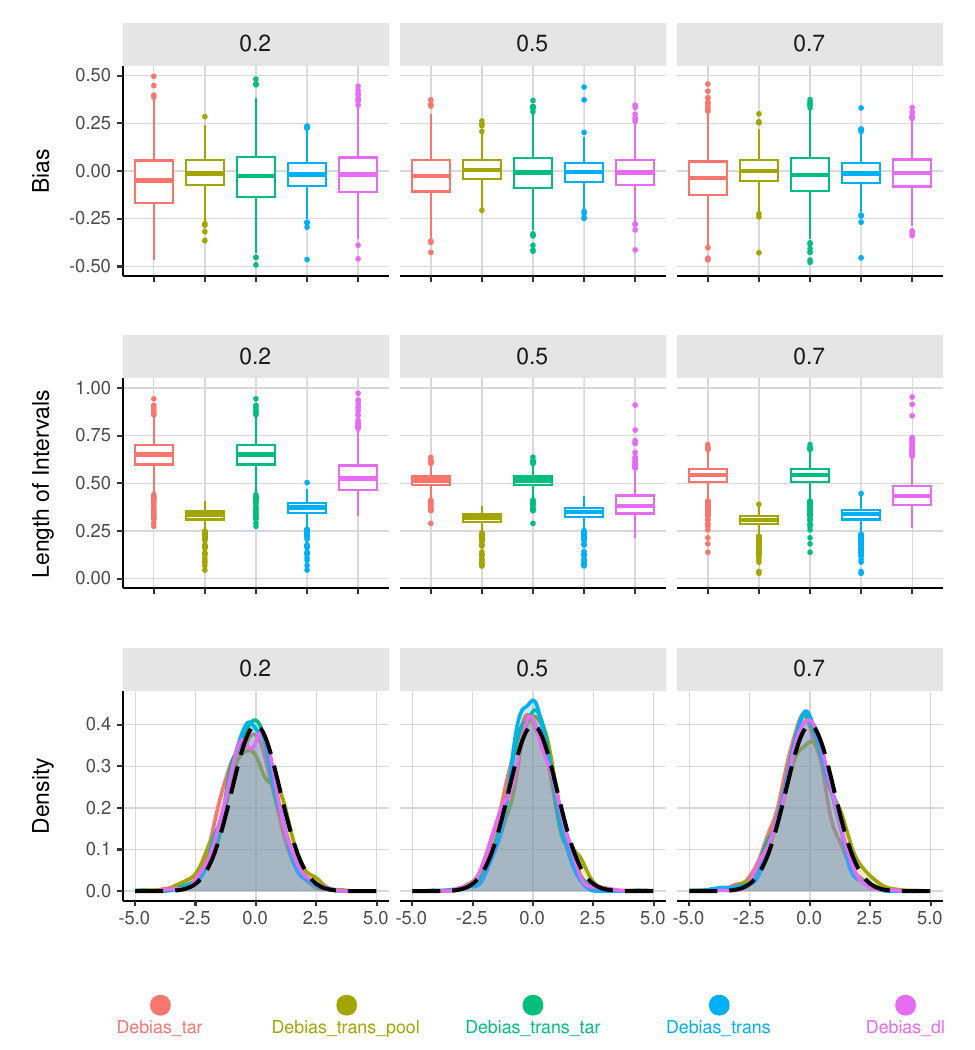}
\end{center}
\caption{Inference results of various methods under $t$-distribution with strong covariate shift, including boxplots of bias, boxplots of confidence interval lengths, and density plots of normalized estimates compared to $\mathcal{N}(0,1)$ with black dotted line.}
\label{simu:infer_t_cov}
\end{figure} 

\clearpage

\subsection{Details of Real Data Application} \label{section-sup3}

We first give a brief introduction for the two genes studied in Section \ref{section6}.

\begin{itemize}[leftmargin=*]
    \item JAM2 (Junctional Adhesion Molecule 2) is a gene that encodes a protein involved in tight junctions between epithelial and endothelial cells, playing a crucial role in cell adhesion and maintaining barrier functions. It is associated with various cellular processes such as cell proliferation and migration. Studies have shown that low expression of JAM2 is linked to poor prognosis in several cancers, including lung adenocarcinoma (LUAD), where reduced JAM2 levels correlate with tumor progression and metastasis. Investigating its low expression helps in understanding cancer progression and identifying therapeutic targets for limiting tumor growth \citep{dong2024jam2}.
    \item SH2D2A encodes the T-cell-specific adapter protein (TSAd), which is essential for T-cell signaling and activation. The expression of SH2D2A is tightly regulated at both the transcriptional and translational levels, with cAMP signaling shown to induce its mRNA expression in T cells. However, low levels of SH2D2A protein may affect immune responses. Notably, reduced SH2D2A expression can enhance T-cell-mediated anti-tumor immunity, making it an interesting target in cancer immunotherapy research \citep{berge2012sh2d2a}.
\end{itemize}

Then we provide the complete name for each target issue, which is shown in Table \ref{tab:tissue name}.

\begin{table}[htb]
\centering
\captionsetup{skip=3pt}
\caption{Complete Name for 11 Target Brain Issues in the Application}
\label{tab:tissue name}
\footnotesize
\begin{tabular}{|>{\raggedright\arraybackslash}p{6.5cm}|p{3cm}|}
    \hline
    \textbf{Tissue} & \textbf{Abbreviation} \\
    \hline
    brain amygdala & Amygdala \\
    \hline
    brain hypothalamus & Hypothalamus \\
    \hline
    brain caudate basal ganglia & C.B.ganglia \\
    \hline
    brain cerebellar hemisphere & C.hemisphere \\
    \hline
    brain cerebellum & Cerebellum \\
    \hline
    brain cortex & Cortex \\
    \hline
    brain frontal cortex ba9 & F.cortex \\
    \hline
    brain anterior cingulate cortex ba24 & A.C.cortex \\
    \hline
    brain nucleus accumbens basal ganglia & N.C.A.B.ganglia \\
    \hline
    brain putamen basal ganglia & P.B.ganglia \\
    \hline
    brain spinal cord cervical c-1 & S.C.cervical \\
    \hline
\end{tabular}
\end{table}

There are some additional points in our practical implementation:
\begin{itemize}[leftmargin=*, itemsep=0pt]
    \item The gene expression data is pre-processed by filtering out genes with constant expression levels and applying standard normalization.
    \item We removed the source tissues with sample sizes less than $100$. 
    \item We performed threshold value searching for the ``brain cortex'' tissue as the target study in real data applications. The five-fold cross-validation relative prediction errors on the “brain cortex” tissue, shown in Figure \ref{fig:realdata thres}, guide our choice of $t_1 = 7$ and $t_2 = 0.8$ for all tasks. The other regularization parameters follow the guidance in Section 8.1.3.
    \item We observe that source data appears less beneficial at $\tau=0.5$ but seems to be quite helpful at $\tau=0.2$. This suggests that the distribution shift between the target and source studies may vary with the quantile level of model responses.
\end{itemize}

\begin{figure}[ht] 
\begin{center}
    \includegraphics[width=0.9\linewidth]{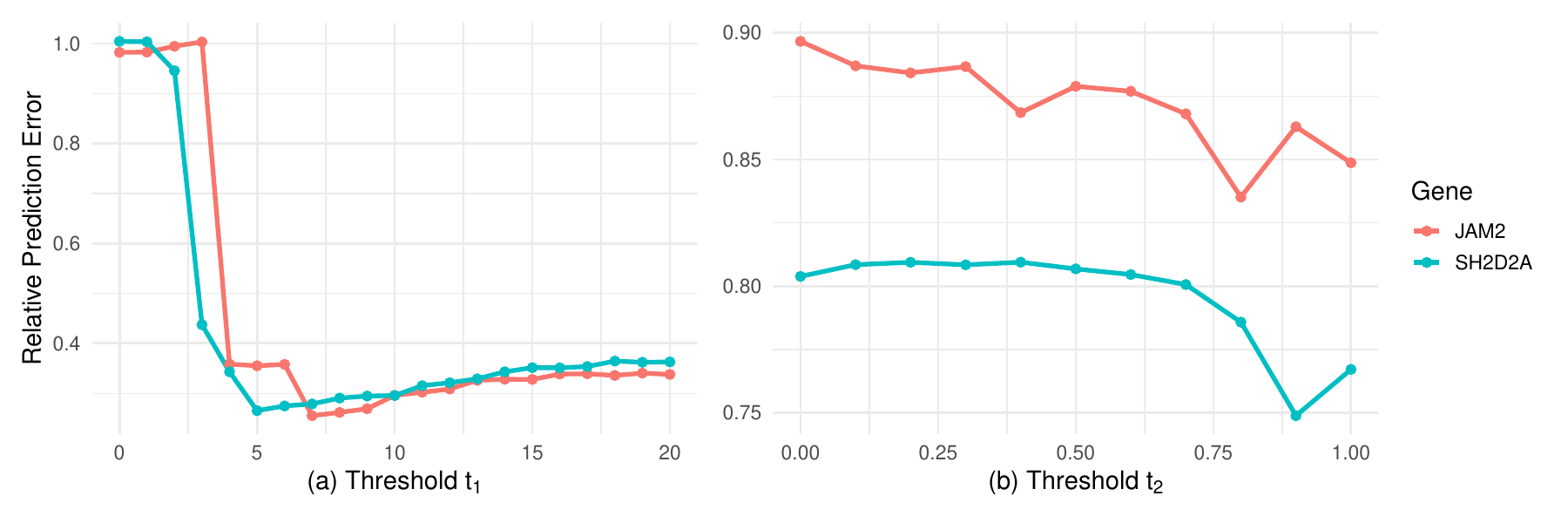}
\end{center}
\caption{Transfer performance under different values for the thresholds $t_1, t_2$ for predicting the cortex tissue at quantile level $\tau = 0.2$.}
\label{fig:realdata thres}
\end{figure} 

\clearpage

\end{document}